\documentclass[12pt, a4paper, reqno]{amsart}
\usepackage{mathrsfs}
\usepackage{bbm}
\usepackage{pgf,tikz}
\usepackage{tabu}
\usepackage{amsmath,amscd,amssymb,latexsym}
\usepackage[all]{xy}

\input xypic

\textwidth=152mm \oddsidemargin=4mm
\evensidemargin=\oddsidemargin
\textheight=238mm \advance\voffset-20mm \headheight=10mm \headsep=8mm


\DeclareMathVersion{can}
\DeclareMathAlphabet{\can}{OT1}{cmss}{m}{n}
\vfuzz2pt 
\hfuzz2pt 
\newtheorem{thm}{Theorem}[section]

\newtheorem{lem}[thm]{Lemma}

\newtheorem{rem}[thm]{Remark}
\newtheorem{exa}[thm]{Example}
\theoremstyle{definition}

\theoremstyle{fact}

\theoremstyle{conjecture}

\numberwithin{equation}{section}


\begin{document}
\title[Optimal minimal Linear codes from posets]
{Optimal minimal Linear codes from posets}

\author[J. Y. Hyun]{Jong Yoon Hyun}
 \address{\rm Konkuk University, Glocal Campus, 268 Chungwon-daero Chungju-si Chungcheongbuk-do 27478, South Korea}
\email{hyun33@kku.ac.kr}

\author[H. K. Kim]{Hyun Kwang Kim}
 \address{\rm Pohang University of Science and Technology, 77 Cheongam-ro, Nam-Gu, Pohang, 37673, South Korea}
 \email{hkkim@postech.ac.kr}

\author[Y. Wu]{ Yansheng Wu}
\address{\rm Department of Mathematics, Ewha Womans University, 
52, Ewhayeodae-gil, Seodaemun-gu, Seoul, 03760, South Korea}
\email{wysasd@163.com}

\author[Q. Yue]{Qin Yue}
\address{\rm Department of Mathematics, Nanjing University of Aeronautics and Astronautics, Nanjing, Jiangsu, 211100, P. R. China; State Key Laboratory of Cryptology, P. O. Box 5159, Beijing, 100878, P. R. China}
\email{yueqin@nuaa.edu.cn}

\subjclass[2010]{94B05,   06A11, 06E30}
\keywords{poset, optimal binary linear code, minimal binary linear code, weight distribution}

\date{\today}


\baselineskip=20pt

\begin{abstract}   

Recently, some infinite families of minimal and optimal binary linear codes were constructed from simplicial complexes by Hyun {\em et al.} We extend this construction method to arbitrary posets. Especially, anti-chains are corresponded to simplicial complexes. 

In this paper, we present two constructions of binary linear codes from hierarchical posets of two levels. In particular, we determine the weight distributions of binary linear codes associated with hierarchical posets with two levels. Based on these results, we also obtain some optimal and minimal binary linear codes not satisfying the condition of Ashikhmin-Barg. 

\end{abstract}

\maketitle

\bigskip
\section{Introduction}

Let $\Bbb F_2$ be the finite field with order two. For positive integers $n, k$ and $d$,
an $[n, k, d]$ binary linear code $\mathcal{C}$ is a $k$-dimensional subspace of $\mathbb{ F}_2^n$ with minimum (Hamming) distance $d$. We sometimes denote by $w_{\min}$ instead of $d$. 
The support $\mathrm{supp}(v)$ of a vector $v \in \Bbb F_2^n$ is defined by the set of nonzero coordinate positions. The Hamming weight $wt(v)$ of $v\in \mathbb{F}^n_2$ is defined by the size of $\mathrm{supp}(v)$.

We say that a linear code is {\em distance-optimal} if it has the highest minimum distance with prescribed length and dimension. An $ [n,k,d]$ linear code is called {\em almost distance-optimal} if the code $[n, k, d + 1]$ is optimal, see \cite[Chapter 2]{HP}. For an $[n, k, d]$ binary linear code, the Griesmer bound (see \cite{G}) states that
\begin{eqnarray*}
n\ge \sum_{i=0}^{k-1}\bigg\lceil {\frac{d}{2^i}}\bigg \rceil,  
\end{eqnarray*} 
where $\lceil {x} \rceil$ denotes the smallest integer greater than or equal to $x$. 
We say that a linear code is a {\em Griesmer code} if it meets the Griesmer bound with equality. One can verify that Griesmer codes are distance-optimal.

Let $A_i$ be the number of codewords in a linear code $\mathcal C$
 with Hamming weight $i$. The {\em weight enumerator} of 
 $\mathcal C$ is defined by
 $1+A_1z+A_2z^2+\cdots+A_nz^n.$
The sequence $(1, A_1, A_2, \ldots, A_n)$ is called the {\em weight distribution} of  
 $\mathcal C$. A code $\mathcal{C}$ is $t$-weight if the number of nonzero $A_{i}$ in the sequence $(A_1, A_2, \ldots, A_n)$ is equal to $t$. The study of the weight distribution of a linear code is important in both
 theory and application because the weight distribution of a linear code can be used to estimate the error correcting capability and
 the error probability of error detection and correction with respect to some algorithms.


Constructing minimal linear codes is an active research topic because they could be decoded with the minimum distance decoding method \cite{AB}, and have applications in secret sharing and secure two-party computation \cite{CDY,CCP,CMP,DY,KY,M,W1, W2, W3,W4, YD}.
Aschikhmin and Barg \cite{AB} presented a sufficient condition for a linear code to be minimal. The first example of a minimal linear code violating Ashikhmin-Barg's condition was given by  Cohen {\em et al.} in \cite[Remark 1]{CMP}. Chang {\em et al.} \cite{CH} presented one infinite family of minimal binary linear codes violating Ashikhmin-Barg's condition. Ding, Heng and Zhou \cite{DHZ, HDZ} presented a necessary and sufficient condition for $q$-ary linear codes to be minimal, and using this characterization they obtained some infinite families of minimal binary and ternary linear codes not satisfying the condition of Ashikhmin-Barg. For more general case, Bartoli and Bonini \cite{BB} found one infinite families of minimal $q$-ary linear codes for which the Ashikhmin-Barg's condition does not hold, that is, they generalized the constructions of Ding, Heng, Zhou \cite{HDZ} to any field $\Bbb F_q$ with odd order.

In this paper, we focus on constructions of distance-optimal binary (minimal) linear codes by using posets.  Since the order ideals of hierarchical posets are easy to handle, i.e., they are just
a disjoint union of subsets, we determine parameters of codes generated by one or two order ideals in hierarchical posets with two levels. In Section 2, we introduce basic concepts on posets and some known results on minimal linear codes. In Section 3, we present the closed form of the generating function associated with an order ideal which allows us to compute efficiently the Hamming weights of linear codes. In Section 4, we introduce hierarchical posets with two levels and determine the form of order ideals. 
In Section 5, we determine the weight distributions of binary linear codes associated with hierarchical posets with two levels.
In Section 6, we derive some optimal and minimal binary linear codes based on the results of Section 5. Finally we conclude this paper in Section 7.





For convenience of the reader, we list the following notations used in this paper:

\begin{tabular}{ll}

$[n]$& the set $\{1,2,\ldots,n\}$,\\
$\mathbb{P}=([n],\preceq)$ & a partially ordered set on $[n]$, \\
$I$& an order ideal of $\mathbb{P}$,\\
$\mathcal{O}_{\mathbb{P}}$& the set of all order ideals of $\mathbb{P}$,\\
$I(\mathbb{P})$ & the set of order ideals of $\mathbb{P}$ that are contained in $I$,\\
$\mathcal{I}$ & a set of order ideals of $\mathbb{P}$, i.e., $\mathcal{I}\subseteq\mathcal{O}_{\mathbb{P}}$,\\
$\mathcal{I}(\mathbb{P})$ & the set of order ideals of $\mathbb{P}$ that are contained some order ideals in $\mathcal{I}$,\\
$a-A$& the set $\{a-b:b\in A\}$,\\
$A\backslash B$& the set $\{x: x\in A\mbox{ and } x\notin B\}$,\\
$|A|$& the number of elements of a set $A$,\\
$D^{c}$& the complement of a subset $D$ of $[n]$.\\
\end{tabular}

\section{Preliminaries}

\subsection{Posets}$~$

We say that $\mathbb{P} = ([n],\preceq)$ is a {\bf partially ordered set} (abbreviated as a poset) if $\mathbb{P}$ is a partial order relation on $[n]$, that is, for all $i,j,k \in [n]$ we have that: $(i)$ $i\preceq i$; $(ii)$ $i\preceq j$ and $j\preceq i$ imply $i=j$; $(iii)$ $i\preceq j$ and $j\preceq k$ imply $i\preceq k$. 

Let $\mathbb{P}=([n],\preceq)$ be a poset. Two distinct elements $i$ and $j$ in $[n]$ are called {\bf comparable} if either $i\preceq j$ or $j\preceq i$, and incomparable otherwise. It is said that a poset $\mathbb{P}$ is an {\bf anti-chain} if every pair of distinct elements is incomparable.

A nonempty subset $I$ of $\mathbb{P}$ is called an {\bf order ideal} if $j \in I $ and $i \preceq j$ imply $i \in I$. For a subset $E$ of $\mathbb{P}$, the smallest order ideal of $\mathbb{P}$ containing $E$ is denoted by $\langle E\rangle$. For an order ideal $I$ of $\mathbb{P}$, we use $I(\mathbb{P})$ to denote the set of order ideals of $\mathbb{P}$ which is contained in $I$.
Let $\mathcal{I}=\{I_1,\ldots, I_m\}$ be a subset of $\mathcal{O}_{\mathbb{P}}$.
We define  
\begin{equation}
\mathcal{I}(\mathbb{P})=\{J\in \mathcal{O}_{\mathbb{P}}: J\subseteq I\in \mathcal{I}\}=\bigcup_{i=1}^m I_i(\mathbb{P}).
\end{equation}
Then $\mathcal{I}(\mathbb{P})$ is an order ideal of $\mathcal{O}_{\mathbb{P}}$ with partial order $\subseteq$.

 \begin{rem}{\rm
 We point out that if $\mathbb{P}$ is an anti-chain, then $\mathcal{I}(\mathbb{P})$ is a simplicial complex. In \cite{CH} and \cite{HLL}, the authors  produced infinite families of distance-optimal (minimal) linear codes from simplicial complexes. In the last two sections, we will employ hierarchical posets (see Section 4) to derive infinite families of distance-optimal (minimal) linear codes.   
 }
 \end{rem}

\begin{exa}{\rm
Let $\mathbb{P}=([4],\preceq)$ be a poset with $1\prec 2$, $3\prec 4$ and the other pairs $(i,j)$ are incomparable. Let $\mathcal {I}_i$ be subsets of $\mathcal{O}_{\mathbb{P}}$ for $i=1,2,3$.  

$(1)$ If $\mathcal {I}_1=\langle \{2\}\rangle=\{\{1,2\}\}$, then $\mathcal{I}_1(\mathbb{P})=\{\emptyset, \{1\}, \{1,2\}\}.$

$(2)$ If $\mathcal {I}_2=\{\{1,2\}, \{3,4\}\}$, then $\mathcal {I}_2(\mathbb{P})=\{\emptyset, \{1\}, \{1,2\}, \{3\}, \{3,4\}\}.$

$(3)$ If $\mathcal {I}_3=\{\{1,2\}, \{1,3,4\}\}$, then $\mathcal {I}_3(\mathbb{P})=\{\emptyset, \{1\}, \{1,2\}, \{3\}, \{3,4\}, \{1,3,4\}\}.$
}
\end{exa}


\subsection{Minimal linear codes}$~$

For two vectors $u,v\in \Bbb F_2^n$, we say that $u$ covers $v$ if $\mathrm{supp}(v)\subseteq\mathrm{supp}(u)$. 
A nonzero codeword $u$ in a linear code $\mathcal{C}$ is said to be {\bf minimal} if $u$ covers the zero vector and the $u$ itself but no other codewords in the code $\mathcal{C}$. A linear code $\mathcal{C}$ is said to be {\bf minimal} if every nonzero codeword in the code $\mathcal{C}$ is minimal.

The following lemma developed by Aschikhmin and Barg \cite{AB}   is a useful criterion for a linear code to be minimal. 

\begin{lem} 
A linear code $\mathcal{C}$ over $\Bbb F_2$ with minimum distance $w_{\min}$ is minimal provided that $w_{\min}/{w_{\max}}>1/2$, where $w_{\max}$ denotes the maximum nonzero Hamming weight in the code $\mathcal{C}$.
 \end{lem}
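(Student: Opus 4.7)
The plan is to argue by contradiction, exploiting the special property of addition over $\mathbb{F}_2$: whenever one support contains another, the weight of the sum is the \emph{difference} of the weights.

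First I would suppose that $\mathcal{C}$ contains a nonzero codeword $u$ that fails to be minimal. Unpacking the definition, this means there is a nonzero codeword $v \in \mathcal{C}$ with $v \neq u$ and $\mathrm{supp}(v) \subseteq \mathrm{supp}(u)$. The goal is to produce a codeword whose weight forces $w_{\max}$ to be at least $2w_{\min}$, contradicting the hypothesis $w_{\min}/w_{\max} > 1/2$.

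Next I would compute $wt(u+v)$. Since we are working over $\mathbb{F}_2$, a coordinate $i$ satisfies $(u+v)_i \neq 0$ exactly when $u_i \neq v_i$, i.e., when exactly one of $u_i$ and $v_i$ equals $1$. Because $\mathrm{supp}(v) \subseteq \mathrm{supp}(u)$, there is no coordinate where $v_i = 1$ but $u_i = 0$, so
\[
\mathrm{supp}(u+v) = \mathrm{supp}(u) \setminus \mathrm{supp}(v),
\]
giving the key identity $wt(u+v) = wt(u) - wt(v)$.

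Finally I would assemble the contradiction. The codeword $u+v$ is nonzero because $u \neq v$, hence $wt(u+v) \geq w_{\min}$; by definition $wt(v) \geq w_{\min}$ as well. Adding these two bounds yields $wt(u) = wt(v) + wt(u+v) \geq 2w_{\min}$, and since $u \in \mathcal{C}$ we have $w_{\max} \geq wt(u) \geq 2w_{\min}$, i.e., $w_{\min}/w_{\max} \leq 1/2$. This contradicts the hypothesis, so every nonzero codeword of $\mathcal{C}$ must be minimal. There is no real obstacle here; the only subtlety is the $\mathbb{F}_2$-specific support identity, which is exactly what lets the argument close cleanly in one line.
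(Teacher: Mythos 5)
Your proof is correct. The paper itself gives no proof of this lemma --- it is quoted directly from Ashikhmin--Barg \cite{AB} --- so there is nothing in the text to compare against, but your argument is the standard and complete one for the binary case: if $u$ covers a codeword $v\notin\{0,u\}$, then over $\Bbb F_2$ the containment $\mathrm{supp}(v)\subseteq\mathrm{supp}(u)$ gives $\mathrm{supp}(u+v)=\mathrm{supp}(u)\setminus\mathrm{supp}(v)$, hence $wt(u)=wt(v)+wt(u+v)\ge 2w_{\min}$ and $w_{\min}/w_{\max}\le 1/2$, contradicting the hypothesis. Note also that your key identity $wt(u+v)=wt(u)-wt(v)$ is exactly the condition appearing in the paper's Lemma 2.3 (the Ding--Heng--Zhou criterion), so your argument in effect derives Lemma 2.2 as an immediate corollary of that characterization; the only feature special to $\Bbb F_2$ is that a single scalar multiple of $v$ needs to be considered, which is why the threshold is $1/2$ rather than $(q-1)/q$.
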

 
The following lemma is useful in finding a minimal linear code violating the condition of Aschikhmin-Barg. 
\begin{lem}\cite[Theorem 3.2]{DHZ} 
Let $\mathcal{C}$ be a linear code over $\mathbb{F}_2$. Then the code  $\mathcal{C}$ is minimal if and only if $wt(a+b)\neq wt(a)-wt(b)$ for each pair of distinct nonzero codewords $a$ and $b$ in the code $\mathcal{C}$.
\end{lem}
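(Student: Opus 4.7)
The plan is to convert the weight condition $wt(a+b)=wt(a)-wt(b)$ into a support-inclusion statement by means of the standard $\mathbb{F}_2$ weight identity, and then to observe that the existence of such a pair of distinct nonzero codewords is precisely the failure of the minimality definition stated above.

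First, I would record the identity
\begin{equation*}
wt(a+b) = wt(a) + wt(b) - 2|\mathrm{supp}(a)\cap\mathrm{supp}(b)|,
\end{equation*}
valid for all $a,b\in\mathbb{F}_2^n$ by a coordinate-wise count: a position contributes to $wt(a+b)$ exactly when precisely one of $a,b$ is nonzero there. Rearranging, the equation $wt(a+b)=wt(a)-wt(b)$ is equivalent to $|\mathrm{supp}(a)\cap\mathrm{supp}(b)|=wt(b)$, and since the intersection is a subset of $\mathrm{supp}(b)$, this equality of cardinalities is equivalent to the inclusion $\mathrm{supp}(b)\subseteq\mathrm{supp}(a)$. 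In the terminology introduced just before the lemma, this is exactly the assertion that $a$ covers $b$.

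Next, I would unpack the definition of a minimal code. By definition, $\mathcal{C}$ fails to be minimal precisely when some nonzero codeword covers another distinct nonzero codeword, i.e., when there exist distinct nonzero $a,b\in\mathcal{C}$ with $\mathrm{supp}(b)\subseteq\mathrm{supp}(a)$. Combining this with the first paragraph, $\mathcal{C}$ fails to be minimal if and only if there exist distinct nonzero $a,b\in\mathcal{C}$ satisfying $wt(a+b)=wt(a)-wt(b)$, and taking contrapositives yields the lemma.

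There is no serious obstacle here: the argument reduces to the $\mathbb{F}_2$ weight identity together with a translation of definitions. The only bookkeeping point is to verify that the cases $b=0$ and $b=a$ (in which the weight equality holds trivially and $a$ automatically covers $b$) are exactly the ones ruled out by the \emph{distinct nonzero} hypothesis in the lemma, so that the stated equivalence is non-vacuous.
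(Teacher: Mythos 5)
Your proof is correct. The paper does not prove this lemma but merely cites it from \cite[Theorem 3.2]{DHZ}, and your argument --- using the binary weight identity $wt(a+b)=wt(a)+wt(b)-2|\mathrm{supp}(a)\cap\mathrm{supp}(b)|$ to show that $wt(a+b)=wt(a)-wt(b)$ is equivalent to $\mathrm{supp}(b)\subseteq\mathrm{supp}(a)$, then unwinding the definition of minimality --- is exactly the standard proof given in that reference, with the edge cases $b=0$ and $b=a$ correctly accounted for by the ``distinct nonzero'' hypothesis.
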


\section{Generating functions for order ideals of $\mathcal{O}_{\mathbb{P}}$}


There is a bijection between $\mathbb{F}_2^n$ and $2^{[n]}$ being the power set of $[n]$, defined by $v\mapsto$ supp$(v)$. {\bf Throughout this paper, we will identify a vector in $\mathbb{F}_2^n$  with its support. }

Let $X$ be a subset of $\mathbb{F}_2^n$.
Define 
$$\mathcal{H}_{X}(x_1,x_2\ldots, x_n)=\sum_{u\in X}\prod_{i=1}^nx_i^{u_i}\in \mathbb{Z}[x_1,x_2, \ldots, x_n],
$$
where $u=(u_1,u_2,\ldots, u_n)\in \mathbb{F}_2^n$ and $\mathbb{Z}$ is the ring of integers. We observe that

$(1)$ $\mathcal{H}_{\emptyset}(x_1,x_2\ldots, x_n)=0,$

$(2)$ $\mathcal{H}_{X}(x_1,x_2\ldots, x_n)+\mathcal{H}_{X^{c}}(x_1,x_2\ldots, x_n)=\mathcal{H}_{\mathbb{F}_2^n}(x_1,x_2\ldots, x_n)=\prod_{i\in[n]}(1+x_i)$.\\

We now present the closed form of the generating function associated with an order ideal of $\mathcal{O}_{\mathbb{P}}$.
It allows us to compute efficiently the Hamming weights of linear codes defined in Section 5.

\begin{thm} \label{th1}
Let $\mathbb{P}=([n],\preceq)$ be a poset and let $\mathcal{I}=\{I_1,\ldots, I_k\}$ be a subset of $\mathcal{O}_{\mathbb{P}}$.
Then 
 \begin{align}
 \mathcal{H}_{\mathcal{I}(\mathbb{P})}(x_1,x_2\ldots, x_n)=\sum_{\emptyset\neq S\subseteq \mathcal{I}}(-1)^{|S|+1}\mathcal{H}_{\bigcap_{I\in S}I(\mathbb{P})}(x_1,x_2\ldots, x_n),
 \end{align}
In particular, we have that $|\mathcal{I}(\mathbb{P})|=\sum_{\emptyset\neq S\subseteq \mathcal{I}}(-1)^{|S|+1}|{\bigcap_{I\in S}I(\mathbb{P})}|$.
\end{thm}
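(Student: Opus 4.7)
The plan is to recognise the stated identity as inclusion--exclusion lifted from sets to generating polynomials. The assignment $X\mapsto \mathcal H_X(x_1,\ldots,x_n)=\sum_{u\in X}\prod_{i=1}^n x_i^{u_i}$ is by definition $\mathbb Z$-linear with respect to signed combinations of subsets of $\mathbb F_2^n$ (in particular it satisfies $\mathcal H_{X\cup Y}=\mathcal H_X+\mathcal H_Y-\mathcal H_{X\cap Y}$), so any set-theoretic identity linear in characteristic functions lifts automatically to a polynomial identity in $x_1,\ldots,x_n$.

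First I would invoke the identity $\mathcal I(\mathbb P)=\bigcup_{I\in\mathcal I}I(\mathbb P)$ recorded in the definition of $\mathcal I(\mathbb P)$, and apply the classical inclusion--exclusion principle for characteristic functions: for any finite family $\{A_I\}_{I\in\mathcal I}$ of subsets of a common ambient set and any element $u$,
$$\chi_{\bigcup_{I\in\mathcal I}A_I}(u)=\sum_{\emptyset\neq S\subseteq\mathcal I}(-1)^{|S|+1}\chi_{\bigcap_{I\in S}A_I}(u).$$
If one prefers to verify this directly, set $T_u:=\{I\in\mathcal I:u\in A_I\}$; then the right-hand side equals $\sum_{\emptyset\neq S\subseteq T_u}(-1)^{|S|+1}=1-(1-1)^{|T_u|}$, which is $1$ when $T_u\neq\emptyset$ and $0$ otherwise, matching the left-hand side.

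Next I would apply this pointwise identity with $A_I=I(\mathbb P)\subseteq\mathbb F_2^n$, multiply both sides by the monomial $\prod_{i=1}^n x_i^{u_i}$, sum over $u\in\mathbb F_2^n$, and interchange the two finite sums. The left side collapses to $\mathcal H_{\mathcal I(\mathbb P)}(x_1,\ldots,x_n)$, while for each fixed nonempty $S\subseteq\mathcal I$ the inner sum on the right collapses to $\mathcal H_{\bigcap_{I\in S}I(\mathbb P)}(x_1,\ldots,x_n)$. This gives the asserted formula. For the cardinality statement, I would specialise $x_1=\cdots=x_n=1$; since $\mathcal H_Y(1,\ldots,1)=|Y|$ for every $Y\subseteq\mathbb F_2^n$, the claim on $|\mathcal I(\mathbb P)|$ drops out at once.

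The main obstacle is essentially non-existent: this is a bookkeeping lemma whose sole substantive content is the classical inclusion--exclusion principle, and the generating-function dressing is cosmetic. The only point of care is notational, namely that the outer summation runs over subsets of the indexing family $\mathcal I$ itself rather than over subsets of some canonical index set $[m]$; distinct elements of $\mathcal I$ might index coincident intersections, but this does not affect the validity of the identity.
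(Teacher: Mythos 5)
Your proof is correct and follows essentially the same route as the paper's: both apply the inclusion--exclusion principle at the level of indicator functions of the sets $I(\mathbb{P})$, multiply by the monomial $\prod_i x_i^{u_i}$, and interchange the finite sums. Your added verification of the indicator identity via $1-(1-1)^{|T_u|}$ and the specialisation $x_1=\cdots=x_n=1$ for the cardinality claim are harmless elaborations of steps the paper leaves implicit.
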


\begin{proof} 
By the inclusion-exclusion principle, $$\bigcup_{j=1}^tA_j=\sum_{k=1}^t(-1)^{k+1}\sum_{1\le i_1<i_2<\cdots<i_k\le t} A_{i_1}\cap \cdots \cap A_{i_k},$$ where $A_1,\ldots,A_t$ are subsets of $[n]$. 
Let $1_X$ stand for the indicator function of a subset $X$ of $[n]$, i.e., $1_X(u)=1$ if and only if $u\in X$.
Then \begin{eqnarray*}&&\mathcal{H}_{\mathcal{I}(\mathbb{P})}(x_1,x_2\ldots, x_n)
\\
&=&\sum_{u\in \mathcal{I}(\mathbb{P})}\prod_{i=1}^nx_i^{u_i}= \sum_{u\in \mathcal{I}(\mathbb{P})}1_{\mathcal{I}(\mathbb{P})}(u)\prod_{i=1}^nx_i^{u_i}\\
&=& \sum_{u\in \mathcal{I}(\mathbb{P})}1_{\bigcup_{i=1}^m {I_i}(\mathbb{P})}(u)\prod_{i=1}^nx_i^{u_i}\\
&=& \sum_{u\in \mathcal{I}(\mathbb{P})}\sum_{k=1}^{m}(-1)^{k+1}\sum_{1\le i_1<i_2<\cdots<i_k\le m} 1_{{I_1}(\mathbb{P})\bigcap\cdots \bigcap {I_k}(\mathbb{P})}(u)
\prod_{i=1}^nx_i^{u_i}\\
&=& \sum_{k=1}^{m}(-1)^{k+1}\sum_{1\le i_1<i_2<\cdots<i_k\le m} \sum_{u\in  {I_1}(\mathbb{P})\bigcap\cdots \bigcap {I_k}(\mathbb{P})}
\prod_{i=1}^nx_i^{u_i}\\
&=&\sum_{\emptyset\neq S\subseteq \mathcal{I}}(-1)^{|S|+1}\mathcal{H}_{\bigcap_{I\in S}I(\mathbb{P})}(x_1,x_2\ldots, x_n).
\end{eqnarray*}

This completes the proof.
\end{proof}

\begin{exa} {\rm
Let $\mathbb{P}_1=([4],\preceq)$ be a poset given by the Hasse diagram in Figure 1. Let $\mathcal{I}=\{I_1,I_2\}$ be a subset of $\mathcal{O}_{\mathbb{P}_1}$, where $I_1=\{1,2\},I_2=\{3,4\}$. Then

$(1) $  $I_1(\mathbb{P})=\{\emptyset, \{2\},\{1,2\}\}$;
 
$(2)$ $I_2(\mathbb{P})=\{\emptyset, \{4\},\{3,4\}\}$;

$(3)$ $\mathcal{I}(\mathbb{P})=I_1(\mathbb{P})\cup I_2(\mathbb{P})=\{\emptyset, \{2\},\{1,2\}, \{4\},\{3,4\}\}$;

$(4)$ $\mathcal{H}_{I_1(\mathbb{P})}(x_1,x_2, x_3, x_4)=1+x_2+x_1x_2$;
$\mathcal{H}_{I_2(\mathbb{P})}(x_1,x_2, x_3, x_4)=1+x_4+x_3x_4$; $\mathcal{H}_{\mathcal{I}(\mathbb{P})}(x_1,x_2, x_3, x_4)=1+x_2+x_1x_2+x_4+x_3x_4$.
Since $I_1(\mathbb{P})\bigcap I_2(\mathbb{P})=\{\emptyset\}$, we can confirm Eq. (3.1) in Theorem \ref{th1}. 
}
\end{exa}

\begin{center}
\begin{tikzpicture}
\draw (2, -5)[fill = black] circle (0.05);
\draw (0, -5)[fill = black] circle (0.05);
\draw (0, -7)[fill = black] circle (0.05);
\draw (2, -7)[fill = black] circle (0.05);

\draw (7, -5)[fill = black] circle (0.05);
\draw (5, -5)[fill = black] circle (0.05);
\draw (6, -6)[fill = black] circle (0.05);
\draw (6, -7)[fill = black] circle (0.05);

\draw[thin] (0, -7) --(0,-5);
\draw[thin] (2, -7) --(2,-5);

\draw[thin] (6, -6) --(5,-5);
\draw[thin] (7, -5) --(6,-6);
\draw[thin] (6, -6) --(6,-7);
\node  at (0, -5) [left] {$~~~1~~$};
\node  at (0, -7) [left] {$~~2~~$};
\node  at (2, -7) [right] {$~~4~~$};
\node  at (2, -5) [right] {$~~3~~$};

\node  at (5, -5) [left] {$~~~4~~$};
\node  at (6, -6) [left] {$~~1~~$};
\node  at (6, -7) [right] {$~~2~~$};
\node  at (7, -5) [right] {$~~3~~$};

\node  at (1, -8) [below] {Figure 1 ~$\mathbb{P}_1=([4], \preceq$)};

\node  at (6, -8) [below] {Figure 2 ~$\mathbb{P}_2=([4], \preceq$)};
\end{tikzpicture}
\end{center}

\begin{exa} {\rm
Let $\mathbb{P}_1=([4],\preceq)$ be a poset given by the Hasse diagram in Figure 2. 
Let $\mathcal{I}=\{I_1,I_2\}$, where $I_1=\{1,2,3\},I_2=\{1,2,4\}$. Then

 $(1) $  $I_1(\mathbb{P})=\{\emptyset, \{2\},\{1,2\}, \{1,2,3\}\}$;
 
 $(2)$ $I_2(\mathbb{P})=\{\emptyset, \{2\},\{1,2\},\{1,2,4\}\}$;

$(3) $ $\mathcal{I}(\mathbb{P})=\{\emptyset, \{2\},\{1,2\}, \{1,2,3\},\{1,2,4\}\}$;

$(4)$  $\mathcal{H}_{\mathcal{I}(\mathbb{P})}(x_1,x_2, x_3, x_4)=1+x_2+x_1x_2+x_1x_2x_3+x_1x_2x_4$; $\mathcal{H}_{I_1(\mathbb{P})}(x_1,x_2, x_3, x_4)=1+x_2+x_1x_2+x_1x_2x_3$;
$\mathcal{H}_{I_2(\mathbb{P})}(x_1,x_2, x_3, x_4)=1+x_2+x_1x_2+x_1x_2x_4$.
Since $I_1(\mathbb{P})\bigcap I_2(\mathbb{P})=\{\emptyset,\{2\}, \{1,2\}\}$, we can confirm Eq. (3.1) in Theorem \ref{th1}. 
}
\end{exa}

\section{Hierarchical posets with two levels}

Let $m$ and $n$ be positive integers with $m\leq n$. 
We say that $\mathbb{H}(m,n)=([n],\preceq)$ is a hierarchical poset with two levels if 
$[n]$ is the disjoint union of two incomparable subsets $U=\{1,\ldots,m\}$ and $V=\{m+1,\ldots,n\}$, and $i\prec j$ whenever $i\in U$ and $j\in V$. Its Hasse diagram is given in Figure 3. By convention, $\mathbb{H}(m,m)$ is considered as an anti-chain.

\begin{center}
\begin{tikzpicture}
\draw (-3, -5)[fill = black] circle (0.05);
\draw (-1, -5)[fill = black] circle (0.05);

\draw (1, -5)[fill = black] circle (0.03);
\draw (1.5, -5)[fill = black] circle (0.03);
\draw (2, -5)[fill = black] circle (0.03);

\draw (6, -5)[fill = black] circle (0.05);
\draw (4, -5)[fill = black] circle (0.05);

\draw (-2, -8)[fill = black] circle (0.05);
\draw (0, -8)[fill = black] circle (0.05);

\draw (1, -8)[fill = black] circle (0.03);
\draw (1.5, -8)[fill = black] circle (0.03);
\draw (2, -8)[fill = black] circle (0.03);

\draw (3, -8)[fill = black] circle (0.05);
\draw (5, -8)[fill = black] circle (0.05);

\draw[thin] (-3, -5) --(-2, -8)-- (-1, -5)--(0, -8);
\draw[thin] (-3, -5) --(0, -8);
\draw[thin] (-3, -5) -- (3, -8)-- (-1, -5);
\draw[thin] (-3, -5) --(5, -8)-- (-1, -5);

\draw[thin] (4, -5) --(-2, -8);
\draw[thin] (4, -5) --(5, -8);
\draw[thin] (4, -5) --(0, -8);
\draw[thin] (4, -5) --(3, -8);

\draw[thin] (6, -5) --(-2, -8);
\draw[thin] (6, -5) --(5, -8);
\draw[thin] (6, -5) --(0, -8);
\draw[thin] (6, -5) --(3, -8);

\node  at (-3, -4.5) [above] {$~~~~m+1~~~~$};
\node  at (-1, -4.5) [above] {$~~~~m+2~~~~$};
\node  at (4, -4.5) [above] {$~~~~n-1~~~~$};
\node  at (6, -4.5) [above] {$~~~~n~~~~$};

\node  at (-2, -9) [above] {$~~~~1~~~~$};
\node  at (0, -9) [above] {$~~~~2~~~~$};
\node  at (3, -9) [above] {$~~~~m-1~~~~$};
\node  at (5, -9) [above] {$~~~~m~~~~$};

\node  at (2, -10) [below] {Figure 3~~ $\mathbb{H}(m,n)$};
\end{tikzpicture}
\end{center}

\begin{lem} \label{lem1}
Every order ideal of $\mathbb{H}(m,n)$ can be expressed by $A\cup B$ for $A\subseteq [m]$, $B\subseteq [n]\setminus[m]$, and one of the following holds: $(i)$ $B=\emptyset$; $(ii)$ $ B \neq  \emptyset $ and $A=[m]$.
\begin{proof}
The proof is straightforward.
\end{proof}







\end{lem}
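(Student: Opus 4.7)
The plan is to produce the decomposition directly from the definition of order ideal, splitting any ideal along the two levels $U=\{1,\ldots,m\}$ and $V=\{m+1,\ldots,n\}$ of $\mathbb{H}(m,n)$. Given an order ideal $I$, I would set $A:=I\cap U$ and $B:=I\cap V$, so that $I=A\cup B$ with $A\subseteq[m]$ and $B\subseteq[n]\setminus[m]$ by construction. It then remains only to show that these two pieces cannot be arbitrary: either $B$ is empty, or $A$ must be all of $[m]$.

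The main observation to carry out this dichotomy is the defining relation of $\mathbb{H}(m,n)$: every $i\in U$ satisfies $i\prec j$ for every $j\in V$. So if $B\neq\emptyset$, pick any $j\in B$; then for every $i\in[m]$ we have $i\prec j\in I$, and the downward-closure property of the order ideal forces $i\in I$, hence $[m]\subseteq A$, which gives $A=[m]$. This establishes that any order ideal has one of the two prescribed shapes.

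For completeness I would also verify the (essentially obvious) converse, namely that every $A\cup B$ of the stated form is genuinely an order ideal. In case (i) the set lies inside the antichain $U$, whose elements have no strict predecessors in $\mathbb{H}(m,n)$, so downward-closure is automatic. In case (ii), the strict predecessors of any element in $B\subseteq V$ all lie in $[m]=A$, while elements of $A$ themselves again have no strict predecessors; hence $A\cup B$ is downward-closed.

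There is no real obstacle here: the statement is a direct bookkeeping consequence of the two-level hierarchical structure, and the author's one-line ``straightforward'' proof reflects exactly this. The only point worth flagging is a minor consistency issue with the paper's definition (order ideals are declared nonempty in Section~2 while $\emptyset$ is listed in $\mathcal{I}(\mathbb{P})$ in the examples); this does not affect the structural characterization, which holds verbatim for every nonempty ideal and, under the convention used in Sections~3 and later, also for the empty ideal (case (i) with $A=\emptyset$).
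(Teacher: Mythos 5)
Your proof is correct and is exactly the ``straightforward'' argument the paper leaves to the reader: split $I$ along the two levels and use downward closure to force $A=[m]$ whenever $B\neq\emptyset$, plus the easy converse. Nothing further is needed.
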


\begin{lem}\label{lem2} 
Let $I=A\cup B$ for $A\subseteq [m]$, $B\subseteq [n]\setminus[m]$ be an order ideal of $\mathbb{P}=\mathbb{H}(m,n)$.  

$(1)$ If $B=\emptyset$, then 
\[\mathcal{H}_{I(\mathbb{P})}(x_1,x_2\ldots, x_n)=\sum_{u\in I(\mathbb{P})}\prod_{i=1}^nx_i^{u_i}=\prod_{i\in A}(1+x_i).
\]
In particular, we have that $|I(\mathbb{P})|=2^{|A|}$.

$(2)$ If $B\neq \emptyset$, then 
\[\mathcal{H}_{I(\mathbb{P})}(x_1,x_2\ldots, x_n)=\prod_{i\in[m]}(1+x_i)+\prod_{i\in[m]}x_i(\prod_{j\in B}(1+x_j)-1).
\]
In particular, we have that
$
|I(\mathbb{P})|=2^m+2^{|B|}-1.
$
\end{lem}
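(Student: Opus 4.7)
The plan is to enumerate explicitly the order ideals $J \in I(\mathbb{P})$ via Lemma~\ref{lem1}, collect them into the two shapes allowed by that lemma, and then sum the monomials $\prod_{i}x_i^{u_i}$ over each shape. Throughout, I would use the fact that the restriction of $\mathbb{H}(m,n)$ to $[m]$ is an anti-chain, so every subset of $[m]$ (including $\emptyset$) is automatically an order ideal.

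For part (1), with $I=A\subseteq[m]$, any order ideal $J\subseteq I$ is contained in $[m]$, so the ``top-part'' of $J$ is empty. By Lemma~\ref{lem1} and the anti-chain observation above, $J$ ranges over arbitrary subsets $A'\subseteq A$. Hence
\[
\mathcal{H}_{I(\mathbb{P})}(x_1,\ldots,x_n)=\sum_{A'\subseteq A}\prod_{i\in A'}x_i=\prod_{i\in A}(1+x_i),
\]
and specialising $x_1=\cdots=x_n=1$ gives $|I(\mathbb{P})|=2^{|A|}$.

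For part (2), with $A=[m]$ and $\emptyset\neq B\subseteq[n]\setminus[m]$, Lemma~\ref{lem1} divides the order ideals $J\subseteq I$ into two disjoint families. The first family, corresponding to case (i) of Lemma~\ref{lem1}, consists of $J=A'$ with $A'\subseteq[m]$; its contribution to $\mathcal{H}_{I(\mathbb{P})}$ is $\prod_{i\in[m]}(1+x_i)$. The second family, corresponding to case (ii), consists of $J=[m]\cup B'$ with $\emptyset\neq B'\subseteq B$ (the inclusion $J\subseteq I=[m]\cup B$ forces $B'\subseteq B$); its contribution is
\[
\prod_{i\in[m]}x_i\cdot\sum_{\emptyset\neq B'\subseteq B}\prod_{j\in B'}x_j=\prod_{i\in[m]}x_i\Bigl(\prod_{j\in B}(1+x_j)-1\Bigr).
\]
Adding the two contributions yields the claimed closed form, and substituting $x_i=1$ for every $i$ recovers $|I(\mathbb{P})|=2^m+(2^{|B|}-1)$.

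There is no real obstacle here: the whole argument is a bookkeeping exercise on top of Lemma~\ref{lem1}. The only point deserving a line of justification is the anti-chain remark, which is what lets one declare every subset of $[m]$ to be an order ideal without further checking the downward-closure axiom; once that is in place, both parts reduce to a direct expansion of $\prod_{i}(1+x_i)$ and $\prod_{j\in B}(1+x_j)-1$.
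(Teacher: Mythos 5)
Your proof is correct and fills in exactly the enumeration the paper leaves implicit: the paper's own proof of this lemma is just the remark that it is straightforward, and your case split via Lemma~\ref{lem1} (subsets of $A$ when $B=\emptyset$; subsets of $[m]$ together with $[m]\cup B'$ for $\emptyset\neq B'\subseteq B$ otherwise) is the intended bookkeeping. The counts follow by setting all $x_i=1$, as you note.
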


\begin{proof}
The proof is straightforward.
\end{proof}

\begin{exa} 
{\rm
Let us consider the hierarchical poset $\mathbb{P}=\mathbb{H}(2,4)$ with two levels. 

$(1)$ If $I=\{1,2\}$, then $\mathcal{H}_{I(\mathbb{P})}(x_1,x_2\ldots, x_4)=1+x_1+x_2+x_1x_2=(1+x_1)(1+x_2).$

$(2)$ If $I=\{1,2,3\}$, then $\mathcal{H}_{I(\mathbb{P})}(x_1,x_2\ldots, x_4)=1+x_1+x_2+x_1x_2+x_1x_2x_3=(1+x_1)(1+x_2)+x_1x_2x_3.$

$(3)$ If $I=\{1,2,3,4\}$, then $\mathcal{H}_{I(\mathbb{P})}(x_1,x_2\ldots, x_4)=1+x_1+x_2+x_1x_2+x_1x_2x_3+x_1x_2x_4+x_1x_2x_3x_4=(1+x_1)(1+x_2)+x_1x_2[(1+x_3)(1+x_4)-1].$
}
\end{exa}

\section{Weight distributions of binary linear codes}

In this section, we determine the weight distributions of two types of linear codes defined in (5.1) and (5.5) below which are involved with hierarchical posets of two levels.

\subsection{Linear codes from hierarchical posets with two levels}$~$



Let $\mathbb{P}$ be a poset on $[n]$ and $D=(\mathcal{I}(\mathbb{P}))^c$ considered as the complement of  $\mathcal{I}(\mathbb{P})$ in $2^{[n]}$, where $\mathcal{I}=\{I_1,\ldots,I_k\}\subseteq\mathcal{O}_{\mathbb{P}}$. Recall that there is a bijection between $\mathbb{F}_2^n$ and $2^{[n]}$ being the power set of $[n]$, defined by $v\mapsto$ supp$(v)$. If we identify a vector in $\mathbb{F}_2^n$  with its support, then here $D$ can be viewed as a subset of $\mathbb{F}^n_2$. 
We define a linear code as follows:
 \begin{equation}
 \mathcal{C}_{D}=\{c_{D, u}=(u\cdot x)_{x\in D}: u\in \Bbb F_2^n\},
 \end{equation} where $\cdot$ denotes Euclidean inner product of two elements in
$\Bbb F_2^n$. Then the length of the code $\mathcal{C}_{D}$ is $|D|$ and its dimension is at most $n$.
The Hamming weight of the codeword $c_{D, u}$ of the code $\mathcal{C}_{D}$ becomes that \begin{eqnarray}wt(c_{D, u})&=&|D|-\frac12\sum_{y\in\Bbb F_2}\sum_{x\in D}(-1)^{(u\cdot x)y}\nonumber\\
&=&\frac {|D|}2-\frac12\sum_{x\in D}(-1)^{u\cdot x}\nonumber\\
&=&\frac {|D|}2+\frac12\sum_{x\in \mathcal{I}(\mathbb{P})}(-1)^{u_1 x_1}(-1)^{u_2 x_2}\cdots(-1)^{u_n x_n}\nonumber\\
&=&\frac {|D|}2+\frac12\mathcal{H}_{\mathcal{I}(\mathbb{P})}((-1)^{u_1},(-1)^{u_2},\ldots, (-1)^{u_n}),
\end{eqnarray}
where the second equation holds due to the fact that we just take $y=0$ and $y=1$.






By Lemmas 4.1 and 4.2, the order ideals and their generator functions are easy to determined. This will bring us great convenience in computation of the weight distribution of $\mathcal{C}_{D}$. In this subsection, we always assume that  $\mathbb{P}$ is a hierarchical poset $\mathbb{H}(m,n)$ with two levels, which was introduced in Section 4. 

The weight distribution of the code $\mathcal{C}_{D}$ generated by one or two order ideals will be determined. To do so, for $u=(u_1,u_2,\ldots, u_n)\in \mathbb{F}_2^n$, write $u=(v,w)$, where $v=(u_1,\ldots, u_m)$ and $w=(u_{m+1},\ldots, u_n)$. 
For $X$ a subset of $\mathbb{F}_2^n$, we use $\chi(u|X)$ to denote a Boolean function in $n$-variable, and $\chi(u|X)=1$ if and only if $u\bigcap X=\emptyset$.

\begin{thm} 
Let $\mathbb{H}(m,n)$ be a hierarchical poset with two levels and $I=A\cup B$ an order ideal of $\mathbb{H}(m,n)$ for $A\subseteq [m]$, $B\subseteq [n]\setminus[m]$. Set $\mathcal{I}=\{I\}$.  

$(1)$ If $B=\emptyset$, then the length of the code $\mathcal{C}_{D}$ is $2^n-2^{|A|}$ and its weight distribution is given in Table 1.
\begin{table}[h]  
\caption{Theorem 5.1 (1)}   
\begin{tabu} to 0.4\textwidth{X[2,c]|X[2,c]}  
\hline 
\rm{Weight}&\rm{Frequency}\\ 
\hline
$0$&$1$\\ 
\hline
$2^{n-1}$&$2^{n-|A|}-1$\\
\hline
$2^{n-1}-2^{|A|-1}$& $2^n-2^{n-|A|}$ \\ 
\hline
\end{tabu}  
\end{table}

$(2)$ If $B\neq \emptyset$, then the length of the code $\mathcal{C}_{D}$ is $2^n-2^m-2^{|B|}+1$ and its weight distribution is given in Table 2. 
\begin{table} [h]
\caption{Theorem 5.1 (2)}   
\begin{tabu} to 0.8\textwidth{X[2,c]|X[2,c]}  
\hline 
\rm{Weight}&\rm{Frequency}\\
\hline
$0$&$1$\\
\hline
$2^{n-1}$&$2^{n-m-|B|}-1$\\
\hline
$2^{n-1}-2^{|B|-1}$& $2^{n-m}-2^{n-m-|B|}$ \\
\hline
$2^{n-1}+1-2^{m-1}-2^{|B|}$&$2^{n-1-|B|}$ \\
\hline
$2^{n-1}+1-2^{m-1}-2^{|B|-1}$&$2^{n-1}-2^{n-1-|B|}$ \\
\hline
$2^{n-1}-2^{m-1}$&$2^{n-1-|B|}-2^{n-m-|B|}$ \\
\hline
$2^{n-1}-2^{m-1}-2^{|B|-1}$&$2^{n-1}-2^{n-1-|B|}-2^{n-m}+2^{n-m-|B|}$ \\
\hline
\end{tabu}   
\end{table} 
\end{thm}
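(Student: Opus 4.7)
My overall strategy is to substitute the closed form of $\mathcal{H}_{I(\mathbb{P})}$ from Lemma \ref{lem2} into the weight formula (5.2), and then to enumerate the resulting sign sum as $u$ ranges over $\mathbb{F}_2^n$. The identity
\[
wt(c_{D,u})=\frac{|D|}{2}+\frac{1}{2}\mathcal{H}_{I(\mathbb{P})}\bigl((-1)^{u_1},\ldots,(-1)^{u_n}\bigr)
\]
rests on the cancellation $\sum_{x\in\mathbb{F}_2^n}(-1)^{u\cdot x}=0$, which fails for $u=0$; I will therefore handle the zero vector separately, contributing the frequency-one row of weight $0$ in both tables.

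For part (1), Lemma \ref{lem2}(1) gives $\mathcal{H}_{I(\mathbb{P})}=\prod_{i\in A}(1+x_i)$, whose evaluation at $x_i=(-1)^{u_i}$ is $2^{|A|}$ when $u_i=0$ for every $i\in A$ and is $0$ otherwise. Plugging this into the weight formula yields the two nonzero-codeword weights $2^{n-1}$ and $2^{n-1}-2^{|A|-1}$, and the frequencies $2^{n-|A|}-1$ and $2^n-2^{n-|A|}$ follow by counting the $u$'s in each class (subtracting one from the first count for the separately handled zero vector). This recovers Table 1 directly.

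For part (2), I will decompose $u=(v,w,z)$ along the partition $[n]=[m]\sqcup B\sqcup C$ with $C=([n]\setminus[m])\setminus B$. Lemma \ref{lem2}(2) expresses $\mathcal{H}_{I(\mathbb{P})}$ as a sum of two monomial products; evaluated at signs, the first summand $\prod_{i\in[m]}(1+(-1)^{v_i})$ is nonzero (equal to $2^m$) only when $v=0$, while the second summand equals $(-1)^{wt(v)}\bigl(\prod_{j\in B}(1+(-1)^{w_j})-1\bigr)$, which is governed by the parity of $wt(v)$ together with the indicator of whether $w=0$. The $z$-coordinates do not appear in the generating function, so they contribute only a uniform factor $2^{|C|}=2^{n-m-|B|}$ to each frequency. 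Running through the six non-trivial combinations of the three binary attributes (is $v$ zero? is $wt(v)$ even? is $w$ zero?) yields exactly the six nonzero-weight rows of Table 2, using the elementary counts $|\{v\in\mathbb{F}_2^m\setminus\{0\}:wt(v)\text{ even}\}|=2^{m-1}-1$ and $|\{v\in\mathbb{F}_2^m:wt(v)\text{ odd}\}|=2^{m-1}$.

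The main obstacle I expect is the bookkeeping in part (2): the weight $2^{n-1}+1-2^{m-1}-2^{|B|}$, for instance, comes precisely from the odd-parity $v$'s paired with $w=0$, and a sign slip from $(-1)^{wt(v)}$ would conflate it with the weight $2^{n-1}-2^{m-1}$ arising from even-parity $v$'s with the same $w=0$. I will verify at the end that the seven frequencies sum to $2^n$ and that each weight lies strictly between $0$ and $|D|$, providing a consistency check on the case analysis.
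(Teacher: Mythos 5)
Your proposal is correct and follows essentially the same route as the paper: substitute the closed forms of Lemma 4.2 into Eq.\ (5.2) and split into cases according to whether $v=0$, the parity of $wt(v)$, and whether $w$ vanishes on $B$, with the free coordinates outside $[m]\cup B$ contributing the uniform factor $2^{n-m-|B|}$. Your explicit remark that the passage from $\sum_{x\in D}(-1)^{u\cdot x}$ to $\mathcal{H}_{\mathcal{I}(\mathbb{P})}$ requires $u\neq 0$, so the zero codeword must be treated separately, is a small clarification the paper leaves implicit.
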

\begin{proof} 

Let $\mathbb{P}=\mathbb{H}(m,n)$.
Recall that for $X$ a subset of $\mathbb{F}_2^n$, we use $\chi(u|X)$ to denote a Boolean function in $n$-variable, and $\chi(u|X)=1$ if and only if $u\bigcap X=\emptyset$.
We also recall that for $u=(u_1,u_2,\ldots, u_n)\in \mathbb{F}_2^n$, write $u=(v,w)$, where $v=(u_1,\ldots, u_m)$ and $w=(u_{m+1},\ldots, u_n)$. 

$(1)$ Let $B=\emptyset$. By Lemma \ref{lem2}, the length of the code $\mathcal{C}_{D}$ is $2^n-|\mathcal{I}(\mathbb{P})|=2^n-2^{|A|}$ and  
\begin{multline}
\mathcal{H}_{\mathcal{I}(\mathbb{P})}((-1)^{u_1},(-1)^{u_2},\ldots, (-1)^{u_n})=\prod_{i\in A}(1+(-1)^{u_i})\\
=\prod_{i\in A}(2-2u_i)=2^{|A|}\prod_{i\in A}(1-u_i)=2^{|A|}\chi(v|A).
\end{multline}
The result follows then from Eq. (5.2).

$(2)$ Let $B\neq \emptyset$. By Lemma \ref{lem2}, the length of the code $\mathcal{C}_{D}$ is $2^n-|\mathcal{I}(\mathbb{P})|=2^n-2^m-2^{|B|}+1$ and   
\begin{multline}
\mathcal{H}_{\mathcal{I}(\mathbb{P})}((-1)^{u_1},\ldots, (-1)^{u_n})
=\prod_{i\in 1}^m(1+(-1)^{u_i})+(-1)^{u_1+\cdots+u_m}(\prod_{j\in B}(1+(-1)^{u_j})-1)\\
=2^m\chi(v|[m])+(-1)^{wt(v)}(2^{|B|}\chi(w|B)-1).
\end{multline}
We proceed with the proof by considering the following three cases.

(i) If $v=0$, then $\chi(v|[m])=1$ and $\mathcal{H}_{\mathcal{I}(\mathbb{P})}((-1)^{u_1},\ldots, (-1)^{u_n})=2^{m}+2^{|B|}\chi(w|B)-1$.

(ii) If $v\neq 0$ and $wt(v)$ is odd, then $\chi(v|[m])=0$ and $\mathcal{H}_{\mathcal{I}(\mathbb{P})}((-1)^{u_1},\ldots, (-1)^{u_n})=-2^{|B|}\chi(w|B)+1$.
 
(iii)  If $v\neq 0$ and $wt(v)$ is even, then $\chi(v|[m])=0$ and $\mathcal{H}_{\mathcal{I}(\mathbb{P})}((-1)^{u_1},\ldots, (-1)^{u_n})=2^{|B|}\chi(w|B)-1$.

The result follows then from Eq. (5.2).
\end{proof}

\begin{rem} {\rm 
Let us discuss the parameters of the code $\mathcal{C}_D$ in Theorem 5.1. 

$(1)$ The parameters of the code  $\mathcal{C}_D$ in Theorem 5.1 (1) are $[2^n-2^{|A|},n,2^{n-1}-2^{|A|-1}]$ and these are the same as that in \cite{HLL} constructed from $\mathbb{H}(m,m)$. 

$(2)$ The parameters of the code $\mathcal{C}_{D}$ in Theorem 5.1 (2) are $[2^n-2^m-2^{|B|}+1,k]$, 
where $k=n-1$ or $n$. For instance, if $(m,|B|)=(1,n-1)$, then $2^{n-1}+1-2^{m-1}-2^{|B|}=0$, and  its dimension is $n-1$ in this case.
}
\end{rem}

Let $I_1=A_1\cup B_1$ and $I_2=A_2\cup B_2$ be two distinct order ideals of $\mathbb{H}(m,n)$, where $A_i\subseteq [m]$, $B_i\subseteq [n]\setminus[m]$, $i=1,2$. Here if $B_1=\emptyset$ and $B_2\neq \emptyset$, then  $I_1\subset I_2$ and   $D=(I_2(\mathbb{P}))^c=(\mathcal{I}(\mathbb{P}))^c$,  where $\mathcal{I}=\{I_1,I_2\}$. Note that in this case the code in Eq. (5.1) has been explored in Theorem 5.1 (2). Hence it  suffices to consider the following theorem for the case of two distinct order ideals.

\begin{thm} Let $\mathbb{H}(m,n)$ be a hierarchical poset with two levels. Let $I_1=A_1\cup B_1$ and $I_2=A_2\cup B_2$ be two distinct order ideals of $\mathbb{H}(m,n)$, where $A_i\subseteq [m]$, $B_i\subseteq [n]\setminus[m]$, $i=1,2$ and $I_1\not\subseteq I_2$, $I_2\not\subseteq I_1$. Set $\mathcal{I}=\{I_1,I_2\}$. 

$(1)$ If $B_1=B_2=\emptyset$, then the length of $\mathcal{C}_{D}$ is $2^n-2^{|A_1|}-2^{|A_2|}-2^{|A_1\cap A_2|}$ and its weight distribution is given in Table 3. 
\begin{table}  [h]
\caption{Theorem 5.3 (1)}  
\begin{center}  
\begin{tabu} to 0.95\textwidth{X[2.7,c]|X[3,c]}  
\hline 
\rm{Weight}&\rm{Frequency}\\ 
\hline
$0$&$1$\\ 
\hline
$2^{n-1}$&$2^{n-|A_1\cup A_2|}-1$\\ 
\hline
$2^{n-1}-2^{|A_2|-1}$& $ 2^{n-|A_1\cup A_2|}(2^{|A_1\backslash  A_2|}-1)$ \\ 
\hline
$2^{n-1}-2^{|A_1|-1}$& $2^{n-|A_1\cup A_2|}(2^{|A_2\backslash  A_1|}-1)$ \\ 
\hline
$2^{n-1}-2^{|A_1|-1}-2^{|A_2|-1}$& $2^{n-|A_1\cup A_2|}(2^{|A_1\backslash  A_2|}-1)(2^{|A_2\backslash  A_1|}-1)$ \\ 
\hline
$2^{n-1}-2^{|A_1|-1}-2^{|A_2|-1}+2^{|A_1\cap A_2|-1}$&$2^{n-|A_1\cup A_2|}(2^{|A_1\cap A_2|}-1)2^{|A_1\backslash  A_2|+|A_2\backslash  A_1|}$\\
\hline
\end{tabu}  
\end{center}  
\end{table} 

$(2)$ If $B_1\neq \emptyset$ and $B_2\neq \emptyset$, then the length of $\mathcal{C}_{D}$ is $2^n-2^{|B_1|}-2^{|B_2|}-2^{|B_1\cap B_2|}$ and its weight distribution is given in Table 4.
\begin{table}  [h]
\caption{Theorem 5.3 (2) }  
\begin{center}  
\begin{tabu} to 1\textwidth{X[4,c]|X[3,c]}  
\hline 
\rm{Weight}&\rm{Frequency}\\ 
\hline
$0$&$1$\\ 
\hline
$2^{n-1}$&$2^{n-m-|B_1\cup B_2|}-1$\\ 
\hline
$2^{n-1}-2^{|B_1|-1}$& $2^{n-m-|B_1\cup B_2|}(2^{|B_1\backslash B_2|}-1)$ \\ 
\hline
$2^{n-1}-2^{|B_2|-1}$& $2^{n-m-|B_1\cup B_2|}(2^{|B_2\backslash B_1|}-1)$ \\ 
\hline
$2^{n-1}-2^{|B_1|-1}-2^{|B_2|-1}$& $2^{n-m-|B_1\cup B_2|}(2^{|B_1\backslash B_2|}-1)(2^{|B_2\backslash B_1|}-1)$ \\ 
\hline
$2^{n-1}-2^{|B_1|-1}-2^{|B_2|-1}+2^{|B_1\cap B_2|-1}$&$ 2^{n-m-|B_1\cup B_2|}(2^{|B_1\cap B_2|}-1)2^{|B_1\backslash B_2 |+|B_2\backslash B_1 |}$\\
\hline
$2^{n-1}-2^{m-1}+1-2^{|B_1|}-2^{|B_2|}+2^{|B_1\cap B_2|}$&$2^{n-1-|B_1\cup B_2|}$\\ 
\hline
$2^{n-1}-2^{m-1}+1-2^{|B_2|}-2^{|B_1|-1}+2^{|B_1\cap B_2|}$&$2^{n-1-|B_1\cup B_2|}(2^{|B_1\backslash B_2|}-1)$\\ 
\hline
$2^{n-1}-2^{m-1}+1-2^{|B_1|}-2^{|B_2|-1}+2^{|B_1\cap B_2|}$&$2^{n-1-|B_1\cup B_2|}(2^{|B_2\backslash B_1|}-1)$\\ 
\hline
$2^{n-1}-2^{m-1}+1-2^{|B_1|-1}-2^{|B_2|-1}+2^{|B_1\cap B_2|}$&$2^{n-1-|B_1\cup B_2|}(2^{|B_1\backslash B_2|}-1)(2^{|B_2\backslash B_1|}-1)$\\
\hline
$2^{n-1}-2^{m-1}+1-2^{|B_1|-1}-2^{|B_2|-1}+2^{|B_1\cap B_2|-1}$&$2^{n-1-|B_1\cup B_2|}(2^{|B_1\cap B_2|}-1)2^{|B_1\backslash B_2 |+|B_2\backslash B_1 |}$\\ 
\hline
$2^{n-1}-2^{m-1}$&$2^{n-1-|B_1\cup B_2|}-2^{n-m-|B_1\cup B_2|}$\\
\hline
$2^{n-1}-2^{m-1}-2^{|B_1|-1}$&$(2^{n-1-|B_1\cup B_2|}-2^{n-m-|B_1\cup B_2|})(2^{|B_1\backslash B_2|}-1)$\\
\hline
$2^{n-1}-2^{m-1}-2^{|B_2|-1}$&$(2^{n-1-|B_1\cup B_2|}-2^{n-m-|B_1\cup B_2|})(2^{|B_2\backslash B_2|}-1)$\\
\hline
$2^{n-1}-2^{m-1}-2^{|B_1|-1}-2^{|B_2|-1}$&$2^{n-m-|B_1\cup B_2|}(2^{m-1}-1)(2^{|B_1\backslash B_2|}-1)(2^{|B_2\backslash B_1|}-1)$\\
\hline
$2^{n-1}-2^{m-1}-2^{|B_1|-1}-2^{|B_2|-1}+2^{|B_1\cap B_2|-1}$&$2^{n-m-|B_1\cup B_2|}(2^{m-1}-1)(2^{|B_1\cap B_2|}-1)2^{|B_1\backslash B_2|+|B_2\backslash B_1|}$\\
\hline
\end{tabu}  
\end{center}  
\end{table} 
\end{thm}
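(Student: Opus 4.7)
The plan is to parallel the proof of Theorem 5.1, now invoking the two-set inclusion-exclusion from Theorem \ref{th1}. For $\mathcal{I}=\{I_1,I_2\}$, and using the fact that order ideals are closed under intersection (so that $I_1(\mathbb{P})\cap I_2(\mathbb{P})=(I_1\cap I_2)(\mathbb{P})$), I would start from
\begin{equation*}
\mathcal{H}_{\mathcal{I}(\mathbb{P})}=\mathcal{H}_{I_1(\mathbb{P})}+\mathcal{H}_{I_2(\mathbb{P})}-\mathcal{H}_{(I_1\cap I_2)(\mathbb{P})}.
\end{equation*}
Evaluating at $x_i=(-1)^{u_i}$ and plugging into Eq.~(5.2) reduces everything to counting $u=(v,w)$ that yield each prescribed value of $\mathcal{H}_{\mathcal{I}(\mathbb{P})}$; the code length in each part is read off from the cardinality identity in Theorem \ref{th1} together with the counts from Lemma \ref{lem2}.

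For case (1), both $I_i=A_i$ and $I_1\cap I_2=A_1\cap A_2$ fall under Lemma \ref{lem2}(1), so
\begin{equation*}
\mathcal{H}_{\mathcal{I}(\mathbb{P})}((-1)^{u_1},\ldots,(-1)^{u_n})=2^{|A_1|}\chi(v|A_1)+2^{|A_2|}\chi(v|A_2)-2^{|A_1\cap A_2|}\chi(v|A_1\cap A_2),
\end{equation*}
which is independent of $w$. Because $\chi(v|A_i)\le \chi(v|A_1\cap A_2)$, the indicator triple has only five feasible configurations, corresponding to whether $v$ meets $A_1\setminus A_2$, $A_2\setminus A_1$, and $A_1\cap A_2$ respectively. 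Counting such $v$'s via the disjoint decomposition $A_1\cup A_2=(A_1\setminus A_2)\sqcup(A_2\setminus A_1)\sqcup(A_1\cap A_2)$ and multiplying by $2^{n-m}$ (for the free $w$-coordinates) will yield the five nonzero rows of Table 3.

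For case (2), the hypothesis $B_1,B_2\neq\emptyset$ combined with Lemma \ref{lem1} forces $A_1=A_2=[m]$, so $I_1\cap I_2=[m]\cup(B_1\cap B_2)$. Since the formula of Lemma \ref{lem2}(2) degenerates correctly to Lemma \ref{lem2}(1) when $B=\emptyset$ (the bracket term vanishes), a single uniform expression handles all three generating functions, giving
\begin{equation*}
\mathcal{H}_{\mathcal{I}(\mathbb{P})}=2^m\chi(v|[m])+(-1)^{wt(v)}\bigl[2^{|B_1|}\chi(w|B_1)+2^{|B_2|}\chi(w|B_2)-2^{|B_1\cap B_2|}\chi(w|B_1\cap B_2)-1\bigr].
\end{equation*}
I would then split on $v$ into three subcases ($v=0$; $v\neq 0$ with $wt(v)$ even; $v\neq 0$ with $wt(v)$ odd), each pinning down $\chi(v|[m])$ and the sign, and on $w$ into the five feasible configurations of its own indicator triple, determined by whether $w$ meets $B_1\setminus B_2$, $B_2\setminus B_1$, and $B_1\cap B_2$.

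The main obstacle is the bookkeeping: the product of $3$ choices for $v$ and $5$ choices for $w$ produces exactly the fifteen nonzero rows of Table 4, and each requires a separate weight computation via Eq.~(5.2) together with a frequency equal to the product of eligible $v$-count and $w$-count. The $w$-counts are obtained from the four-part partition $[n]\setminus[m]=(B_1\setminus B_2)\sqcup(B_2\setminus B_1)\sqcup(B_1\cap B_2)\sqcup\bigl([n]\setminus([m]\cup B_1\cup B_2)\bigr)$, whose blocks give independent contributions. The assumptions $I_1\not\subseteq I_2$ and $I_2\not\subseteq I_1$ ensure $B_1\setminus B_2$ and $B_2\setminus B_1$ are nonempty, so the tabulated frequencies are positive; one also needs to verify that the formulas remain valid in the degenerate case $B_1\cap B_2=\emptyset$, where the third indicator is vacuously $1$ and the $B_1\cap B_2$-contributions collapse consistently.
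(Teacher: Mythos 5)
Your proposal is correct and follows essentially the same route as the paper: inclusion--exclusion from Theorem 3.1 applied to $\{I_1,I_2\}$ with $I_1\cap I_2$ evaluated via Lemma 4.2, followed by a partition of $u=(v,w)$ according to which of $A_1\setminus A_2$, $A_2\setminus A_1$, $A_1\cap A_2$ (resp.\ the $B$-analogues, crossed with the three cases $v=0$, $wt(v)$ odd, $wt(v)$ even) the vector meets --- exactly the sets $\mathcal{U}_1,\ldots,\mathcal{U}_5$ the paper writes down. Your explicit $3\times 5$ bookkeeping for part (2) is just a spelled-out version of the paper's ``case by case consideration as in (1).''
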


\begin{proof} 
It suffices to determine the form of the order ideal $I_1\cap I_2$ by Theorem \ref{th1}. 

$(1)$ Let $B_1=B_2=\emptyset$. Then $I_1\cap I_2\subseteq [m]$. By Lemma \ref{lem2}, the length of $\mathcal{C}_{D}$ is $2^n-|\mathcal{I}(\mathbb{P})|=2^n-2^{|A_1|}-2^{|A_2|}-2^{|A_1\cap A_2|}$.
By Theorem \ref{th1}, we have
\begin{multline*}
\mathcal{H}_{\mathcal{I}(\mathbb{P})}((-1)^{u_1},\ldots, (-1)^{u_n})
=2^{|A_1|}\chi(v|A_1)+2^{|A_2|}\chi(v|A_2)-2^{|A_1\cap A_2|}\chi(v|A_1\cap A_2).
\end{multline*}
For two subsets $A_1, A_2$ of $2^{[n]}$, we set
$$\mathcal{U}_1=\{u\in \mathbb{F}_2^n: u\cap (A_1\cup A_2)=\emptyset\},$$ $$\mathcal{U}_2=\{u\in \mathbb{F}_2^n: u\cap A_1=\emptyset, u\cap(A_2\backslash A_1)\neq\emptyset\},$$
$$\mathcal{U}_3=\{u\in \mathbb{F}_2^n: u\cap A_2=\emptyset, u\cap(A_1\backslash A_2)\neq\emptyset\},$$
$$\mathcal{U}_4=\{u\in \mathbb{F}_2^n: u\cap(A_1\backslash A_2)\neq\emptyset, u\cap(A_1\cap A_2)=\emptyset, u\cap (A_2\backslash A_1)\neq \emptyset\},$$
$$\mathcal{U}_5=\{u\in \mathbb{F}_2^n: u\cap(A_1\cap A_2)\neq\emptyset \}.$$

Then 
\begin{align*}
\mathcal{H}_{\mathcal{I}(\mathbb{P})}((-1)^{u_1},\ldots, (-1)^{u_n})=\left\{\begin{array}{lll}
2^{|A_1|}+2^{|A_2|}-2^{|A_1\cap A_2|} &\mbox{if $u\in \mathcal{U}_1$,}\\
2^{|A_1|}-2^{|A_1\cap A_2|}&\mbox{if $u\in \mathcal{U}_2$,} \\
2^{|A_2|}-2^{|A_1\cap A_2|}&\mbox{if $u\in \mathcal{U}_3$,} \\
-2^{|A_1\cap A_2|}&\mbox{if $u\in \mathcal{U}_4$,} \\
0&\mbox{if $u\in \mathcal{U}_5$.}\\
\end{array}\right.
\end{align*}
The result follows then from Lemma \ref{lem2} and Eq. (5.2).

$(2)$ Let $B_1\neq \emptyset$ and $B_2\neq \emptyset$. Then $I_1\cap I_2=[m]\cup (B_1\cap B_2)$. By Lemma \ref{lem2}, the length of $\mathcal{C}_{D}$ is $2^n-|\mathcal{I}(\mathbb{P})|=2^n-2^{|B_1|}-2^{|B_2|}-2^{|B_1\cap B_2|}$.
By Theorem \ref{th1}, we have that
$\mathcal{H}_{\mathcal{I}(\mathbb{P})}((-1)^{u_1},\ldots, (-1)^{u_n})$ is equal to
\begin{eqnarray*}
&&2^{m}\chi(v|[m])+(-1)^{wt(v)}2^{|B_1|}\chi(w|B_1)-(-1)^{wt(v)}\\
&+&2^{m}\chi(v|[m])+(-1)^{wt(v)}2^{|B_2|}\chi(w|B_2)-(-1)^{wt(v)}\\
&-&2^{m}\chi(v|[m])-(-1)^{wt(v)}2^{|B_1\cap B_2|}\chi(v|B_1\cap B_2)+(-1)^{wt(v)}\\
&=&2^{m}\chi(v|[m])+(-1)^{wt(v)}2^{|B_1|}\chi(w|B_1)+(-1)^{wt(v)}2^{|B_2|}\chi(w|B_2)\\
&-&(-1)^{wt(v)}2^{|B_1\cap B_2|}\chi(v|B_1\cap B_2)-(-1)^{wt(v)}.
\end{eqnarray*}
We proceed with the proof by considering the following three cases.

(i) If $v=0$, then $\chi(v|[m])=1$ and 
\begin{multline*}
\mathcal{H}_{\mathcal{I}(\mathbb{P})}((-1)^{u_1},\ldots, (-1)^{u_n})\\
=2^{m}-1+2^{|B_1|}\chi(w|B_1)+2^{|B_2|}\chi(w|B_2)-2^{|B_1\cap B_2|}\chi(v|B_1\cap B_2).   
\end{multline*}

(ii) If $v\neq0$ and $wt(v)$ is odd, then $\chi(v|[m])=0$ and 
\begin{multline*}
\mathcal{H}_{\mathcal{I}(\mathbb{P})}((-1)^{u_1},\ldots, (-1)^{u_n})\\
=1-2^{|B_1|}\chi(w|B_1)-2^{|B_2|}\chi(w|B_2)+2^{|B_1\cap B_2|}\chi(v|B_1\cap B_2).
\end{multline*}

(iii) If $v\neq0$ and $wt(v)$ is even, then $\chi(v|[m])=0$ and 
\begin{multline*}
\mathcal{H}_{\mathcal{I}(\mathbb{P})}((-1)^{u_1},\ldots, (-1)^{u_n})\\
=-1+2^{|B_1|}\chi(w|B_1)+2^{|B_2|}\chi(w|B_2)-2^{|B_1\cap B_2|}\chi(v|B_1\cap B_2).
\end{multline*}

The result follows then from the case by case consideration as in $(1)$, Lemma \ref{lem2} and Eq. (5.2).   
\end{proof}

\begin{rem}
{\rm

Let us discuss the parameters of the code $\mathcal{C}_D$ in Theorem 5.3. 

$(1)$ The parameters of the code $\mathcal{C}_D$ in Theorem 5.3 (1) are  $[2^n-2^{|A_1|}-2^{|A_2|}-2^{|A_1\cap A_2|},n,2^{n-1}-2^{|A_1|-1}-2^{|A_2|-1}]$ and these are the same as that in \cite{HLL} constructed from $\mathbb{H}(m,m)$. 

$(2)$ The parameters of  the code $\mathcal{C}_{D}$ are $[2^n-2^{|B_1|}-2^{|B_2|}-2^{|B_1\cap B_2|},n,2^{n-1}-2^{m-1}+1-2^{|B_1|}-2^{|B_2|}+2^{|B_1\cap B_2|}]$.
}
\end{rem}
\subsection{Linear codes from Boolean functions}$~$

By a Boolean function we mean a function from $\Bbb F_2^n$ 
to $\Bbb F_2$. Let $f$ be a  Boolean function from $\Bbb F_2^n$ to $\Bbb F_2$ such that  $f(0)=0$ but $f(u)=1$ for at least one $u\in \Bbb F_2^n$. We introduce a linear code associated with $f$ as follows:

\begin{equation}
\mathcal{C}_f=\{c_f(s,u)=(sf(x)+u\cdot x)_{x\in \mathbb{F}_2^{n*}}:s\in \Bbb F_2,u\in \Bbb F_2^n\}.
\end{equation}
Then the code $\mathcal{C}_f$ has the length $2^n-1$ and its dimension is at most $n+1$.
The construction of linear codes from Boolean functions can be found in \cite{CH, D2, DHZ, M2, WHWK}.

Let $\mathbb{P}$ be a poset on $[n]$ and $\mathcal{I}=\{I_1,\ldots,I_k\}\subseteq\mathcal{O}_{\mathbb{P}}$. 
Let $f$ be a Boolean function from $\Bbb F_2^n$ to $\Bbb F_2$ with support $\mathcal{I}(\mathbb{P})\backslash \{\emptyset\}$, that is, $f(u)=1$ for all $u\in\mathbb{F}^n_2$ such that $\operatorname{supp}(u)\in\mathcal{I}(\mathbb{P})\backslash \{\emptyset\}$.
By \cite[Lemma 3]{CH}, the Walsh-Hadamard transform defined by $S_f(u)=\sum_{v\in\mathbb{F}_2^n}(-1)^{f(v)+u\cdot v}$ becomes
\[
S_{f}(u)=2^n\delta_{0,u}+2-2\mathcal{H}_{\mathcal{I}(\mathbb{P})}((-1)^{u_1},(-1)^{u_2},\ldots, (-1)^{u_n}).
\]
Then 
\begin{multline*} 
wt(c_f(s,u))=2^n-1-\frac12\sum_{y\in \Bbb F_2}\sum_{x\in \Bbb F_2^{n*}}(-1)^{y(sf(x)+ux)}\\
=2^{n-1}-\frac12S_{sf}(u)
=\left\{\begin{array}{ll}
2^{n-1}-2^{n-1}\delta_{0,u} &\mbox{if $s=0$,}\\
2^{n-1}(1-\delta_{0,u})-1+\mathcal{H}_{\mathcal{I}(\mathbb{P})}((-1)^{u_1},\ldots, (-1)^{u_n}) & \mbox{if $s=1$.}\\
\end{array}\right.
\end{multline*}
It follows that
\begin{equation}
wt(c_f(s,u))=2^{n-1}(1-\delta_{0,u})+\delta_{1,s}\left(\mathcal{H}_{\mathcal{I}(\mathbb{P})}((-1)^{u_1},\ldots,(-1)^{u_n}) -1\right).
\end{equation}

In this subsection, we also assume that $\mathbb{P}$ is a hierarchical poset $\mathbb{H}(m,n)$ with two levels. We determine the weight distribution of the code  $\mathcal{C}_f$ defined in (5.5) only when $k=1$ and $k=2$.

\begin{thm}
Let $\mathbb{H}(m,n)$ be a hierarchical poset with two levels and $I=A\cup B$ an order ideal of $\mathbb{H}(m,n)$ for $A\subseteq [m]$, $B\subseteq [n]\setminus[m]$. Set $\mathcal{I}=\{I\}$. 


$(1)$ If $B=\emptyset$, then the weight distribution of  the code $\mathcal{C}_f$ is given in Table 5.
\begin{table}  [h]
\caption{Theorem 5.5 (1)}   
\begin{tabu} to 0.4\textwidth{X[2,c]|X[2,c]}  
\hline 
\rm{Weight}&\rm{Frequency}\\ 
\hline
$0$&$1$\\ 
\hline
$2^{n-1}$&$2^{n}-1$\\
\hline
$2^{|A|}-1$&$1$\\
\hline
$2^{n-1}-1+2^{|A|}$& $2^{n-|A|}-1$ \\ 
\hline
$2^{n-1}-1$& $2^n-2^{n-|A|}$ \\ 
\hline
\end{tabu}   
\end{table} 

$(2)$ If $B\neq \emptyset$, then the weight distribution of the code $\mathcal{C}_f$ is given in Table 6.
\begin{table}  [h]
\caption{Theorem 5.5 (2)}  
\begin{center}  
\begin{tabu} to 0.8\textwidth{X[2,c]|X[2,c]}  
\hline 
\rm{Weight}&\rm{Frequency}\\
\hline
$0$&$1$\\ 
\hline
$2^{n-1}$&$2^{n}-1+2^{n-1}-2^{n-1-|B|}$\\
\hline
$2^m+2^{|B|}-2$&$1$\\
\hline
$2^{n-1}-2+2^{m}+2^{|B|}$&$2^{n-m-|B|}-1$\\
\hline
$2^{n-1}-2+2^{m}$& $2^{n-m}-2^{n-m-|B|}$ \\
\hline
$2^{n-1}-2^{|B|}$&$2^{n-1-|B|}$ \\
\hline
$2^{n-1}-2+2^{|B|}$&$2^{n-1-|B|}-2^{n-m-|B|}$ \\
\hline
$2^{n-1}-2$&$2^{n-1}-2^{n-1-|B|}-2^{n-m}+2^{n-m-|B|}$ \\
\hline
\end{tabu}  
\end{center}  
\end{table} 
\end{thm}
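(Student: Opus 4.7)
The plan is to start from the weight formula (5.7), substitute the explicit evaluations of $\mathcal{H}_{\mathcal{I}(\mathbb{P})}((-1)^{u_1},\ldots,(-1)^{u_n})$ that were already obtained in the proof of Theorem 5.1 (equations (5.3) and (5.4)), and then count codewords by a case analysis on $(s,u)$. Writing $u=(v,w)$ with $v\in\mathbb{F}_2^m$, $w\in\mathbb{F}_2^{n-m}$, the case $s=0$ is immediate: $wt(c_f(0,u))=2^{n-1}(1-\delta_{0,u})$, contributing one codeword of weight $0$ and $2^n-1$ codewords of weight $2^{n-1}$ in both tables. All the remaining rows come from $s=1$, where the weight equals $2^{n-1}(1-\delta_{0,u})+\mathcal{H}_{\mathcal{I}(\mathbb{P})}((-1)^{u_1},\ldots,(-1)^{u_n})-1$.

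For part (1), when $B=\emptyset$, formula (5.3) gives $\mathcal{H}_{\mathcal{I}(\mathbb{P})}=2^{|A|}\chi(v|A)$. I would split $s=1$ into the singleton $u=0$ (producing weight $2^{|A|}-1$), the class $v\cap A=\emptyset$ with $u\neq 0$ (producing weight $2^{n-1}+2^{|A|}-1$, frequency $2^{n-|A|}-1$ obtained by counting $v$ avoiding $A$ and arbitrary $w$), and the class $v\cap A\neq\emptyset$ (producing weight $2^{n-1}-1$, frequency $2^n-2^{n-|A|}$). These five weight classes assemble into Table 5.

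For part (2), when $B\neq\emptyset$, I use formula (5.4): $\mathcal{H}_{\mathcal{I}(\mathbb{P})}=2^m\chi(v|[m])+(-1)^{wt(v)}(2^{|B|}\chi(w|B)-1)$. The natural case decomposition for $s=1$ is
\begin{enumerate}
\item[(i)] $v=0$, split by whether $w\cap B=\emptyset$ (including $w=0$) or $w\cap B\neq\emptyset$;
\item[(ii)] $v\neq 0$ with $wt(v)$ odd, split by $w\cap B=\emptyset$ vs.\ $w\cap B\neq\emptyset$;
\item[(iii)] $v\neq 0$ with $wt(v)$ even, split by $w\cap B=\emptyset$ vs.\ $w\cap B\neq\emptyset$.
\end{enumerate}
Each subcase produces a single value of $\mathcal{H}_{\mathcal{I}(\mathbb{P})}$, hence a single weight, and the frequency is a product of two elementary binomial-style counts: the number of admissible $v$ (one of $1$, $2^{m-1}$, or $2^{m-1}-1$, with $v=0$ pulled out separately) times the number of admissible $w$ (one of $2^{n-m-|B|}$ or $2^{n-m}-2^{n-m-|B|}$, again with $w=0$ pulled out when necessary). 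Assembling these with the $s=0$ row and combining the class $v\neq 0$, $wt(v)$ odd, $w\cap B\neq\emptyset$ (weight $2^{n-1}$) with the $s=0$ contribution yields Table 6.

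The principal obstacle is bookkeeping rather than conceptual: one must carefully isolate the edge cases $u=0$ (in case (i) the single codeword of $s=1$ with $v=0$ and $w=0$ gives the isolated weight $2^m+2^{|B|}-2$) and $w=0$ (which is allowed in the $\chi(w|B)=1$ subclasses and forces the $-1$ terms like $2^{n-m-|B|}-1$ and $2^{n-1-|B|}-2^{n-m-|B|}$ in the frequency counts), and then merge the entries that happen to coincide with the $s=0$ row at weight $2^{n-1}$. Once the subcase counts are tabulated, consistency with $\sum A_i=2^{n+1}$ gives a convenient sanity check on Tables 5 and 6.
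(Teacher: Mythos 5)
Your proposal is correct and takes essentially the same route as the paper's (very terse) proof: evaluate $\mathcal{H}_{\mathcal{I}(\mathbb{P})}((-1)^{u_1},\ldots,(-1)^{u_n})$ via Eqs.~(5.3) and (5.4), substitute into the weight formula Eq.~(5.6), and count codewords by cases on $s$, $v$, and $w$; your subcase weights and frequencies reproduce Tables 5 and 6 exactly, and you merely supply the bookkeeping the paper omits. (One trivial slip: the weight formula is Eq.~(5.6), not (5.7).)
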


\begin{proof} 
To obtain the weight  of the codeword $c_f(s,u)$, it suffices to compute the value of $\mathcal{H}_{\mathcal{I}(\mathbb{P})}((-1)^{u_1},\ldots, (-1)^{u_n})$ by Eq. (5.6), where $u=(u_1, \ldots, u_n)$. The results follow then from Eqs. (5.3) and (5.4).
\end{proof}

\begin{rem}  {\rm 
Let us discuss the parameters of the code $\mathcal{C}_f$ in Theorem 5.5. 

$(1)$ The parameters of the code  $\mathcal{C}_f$ in Theorem 5.5 (1) are $[2^n-1,k]$, where $k=n$ or $n+1$. For instance, if $|A|=1$, then $2^{|A|}-1=0$, and its dimension is $n$ in this case. These codes are the same as that in \cite{HLL} constructed from $\mathbb{H}(m,m)$.

$(2)$ The parameters of the code $\mathcal{C}_{f}$ in Theorem 5.5 (2) are $[2^n-1,k]$, 
where $k=n$ or $n+1$. For instance, if $(m,|B|)=(1,n-1)$, then $2^{n-1}-2^{|B|}=0$, and  its dimension is $n$ in this case.
}
\end{rem}

\begin{thm}  Let $\mathbb{H}(m,n)$ be a hierarchical poset with two levels. Let $I_1=A_1\cup B_2$ and $I_2=A_2\cup B_2$ be two distinct order ideals of $\mathbb{H}(m,n)$, where $A_i\subseteq [m]$, $B_i\subseteq [n]\setminus[m]$, $i=1,2$ and $I_1\not\subseteq I_2$, $I_2\not\subseteq I_1$. Set $\mathcal{I}=\{I_1,I_2\}$. 

$(1)$ If $B_1=B_2=\emptyset$, then the weight distribution of the code $\mathcal{C}_f$ is given  in Table 7.

$(2)$ If $B_1\neq \emptyset$ and $B_2\neq \emptyset$, then the weight distribution of  the code $\mathcal{C}_f$ is given in Table 8.
\end{thm}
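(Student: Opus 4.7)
The plan is to proceed in close parallel to the proof of Theorem 5.3, using Eq.~(5.6) as the bridge from the generating function $\mathcal{H}_{\mathcal{I}(\mathbb{P})}$ to the weight of $c_f(s,u)$. Concretely, for each $u \in \mathbb{F}_2^n$ we have already determined, in the proof of Theorem 5.3, the value of $\mathcal{H}_{\mathcal{I}(\mathbb{P})}((-1)^{u_1},\ldots,(-1)^{u_n})$ case-by-case; the same analysis applies verbatim here since the input data $(I_1,I_2)$ are the same. Hence the main calculational work is already done, and what remains is to translate those values, via Eq.~(5.6), into weights of codewords.

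First I would split the codewords of $\mathcal{C}_f$ according to the parameter $s\in \mathbb{F}_2$. When $s=0$, Eq.~(5.6) gives $wt(c_f(0,u)) = 2^{n-1}(1-\delta_{0,u})$, so $s=0$ contributes one zero codeword and $2^n-1$ codewords of weight $2^{n-1}$. When $s=1$, we have $wt(c_f(1,u)) = 2^{n-1}(1-\delta_{0,u}) + \mathcal{H}_{\mathcal{I}(\mathbb{P})}((-1)^{u_1},\ldots,(-1)^{u_n}) - 1$, which again splits cleanly into $u=0$ versus $u\neq 0$.

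For part (1) ($B_1=B_2=\emptyset$), I would import the five-way partition $\mathcal{U}_1,\ldots,\mathcal{U}_5$ of $\mathbb{F}_2^n$ introduced in Theorem 5.3's proof, together with the corresponding values of $\mathcal{H}_{\mathcal{I}(\mathbb{P})}((-1)^{u_1},\ldots,(-1)^{u_n})$. Evaluating Eq.~(5.6) on each block, with the further refinement $u=0$ vs $u\neq 0$ (noting that $u=0 \in \mathcal{U}_1$), gives the weight of $c_f(1,u)$. The cardinalities $|\mathcal{U}_i|$ are computed exactly as in Theorem 5.3: $|\mathcal{U}_1|=2^{n-|A_1\cup A_2|}$, $|\mathcal{U}_2|=2^{n-|A_1\cup A_2|}(2^{|A_2\setminus A_1|}-1)$, etc. Combining the $s=0$ and $s=1$ contributions then yields Table 7; when a weight produced from $s=1$ coincides with $2^{n-1}$, the frequencies must be added, which explains the compound entries analogous to Theorem 5.5 (1).

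For part (2) ($B_1\neq\emptyset$, $B_2\neq \emptyset$), the same strategy applies, now using the three-way split on $v$ from Theorem 5.3 (namely $v=0$; $v\neq 0$ with $\mathrm{wt}(v)$ odd; $v\neq 0$ with $\mathrm{wt}(v)$ even) together with the refined partition on $w$ given by the intersections with $B_1, B_2, B_1\cap B_2$. Applying Eq.~(5.6) to each resulting block, and then merging with the contribution from $s=0$, produces Table 8. The main obstacle will be purely bookkeeping: the proof of Theorem 5.3 (2) already exhibits fifteen nonzero weight classes, and the $s=0$ contribution adds $2^n-1$ codewords of weight $2^{n-1}$ that must be carefully absorbed, shifting one frequency by $+(2^n-1)$ (or equivalently changing its frequency to a compound expression as in Theorem 5.5 (2)). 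The frequencies of all other weights are inherited directly from the $s=1$ enumeration, with the extra $u=0$ singleton contributing its own weight class $\mathcal{H}_{\mathcal{I}(\mathbb{P})}(1,\ldots,1)-1 = |\mathcal{I}(\mathbb{P})|-1$. No new case analysis is needed beyond what Theorem 5.3 and Theorem 5.5 already provide, so the proof reduces to the remark that these ingredients combine via Eq.~(5.6) to yield Tables 7 and 8.
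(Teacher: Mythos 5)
Your proposal is correct and follows essentially the same route as the paper, which likewise reduces Theorem 5.7 to Eq.~(5.6) together with the case analysis (the partition $\mathcal{U}_1,\ldots,\mathcal{U}_5$ and the three-way split on $v$) already carried out in the proof of Theorem 5.3. Your additional bookkeeping --- the $s=0$ versus $s=1$ split, the $u=0$ singleton of weight $|\mathcal{I}(\mathbb{P})|-1$, and the merging of the $2^n-1$ codewords of weight $2^{n-1}$ with any coinciding $s=1$ class --- is exactly what the paper leaves implicit.
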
 
\begin{proof} 
By Eq. (5.6), to obtain the weight  of the codeword $c_f(s,u)$, it suffices to compute the value of $\mathcal{H}_{\mathcal{I}(\mathbb{P})}((-1)^{u_1},\ldots, (-1)^{u_n})$, where $u=(u_1, \ldots, u_n)$. 
The results follow then from the proof of Theorem 5.3.
\end{proof}

\begin{rem}  {\rm 
Let us discuss the parameters of the code $\mathcal{C}_f$ in Theorem 5.7. 

$(1)$ The parameters of the code  $\mathcal{C}_f$ in Theorem 5.7 (1) are $[2^n-1,n+1,2^{|A_1|}+2^{|A_2|}-2^{|A_1\cap A_2|}-1]$ and these codes are the same as that in \cite{HLL} constructed from $\mathbb{H}(m,m)$.

$(2)$ The parameters of the code $\mathcal{C}_{f}$ in Theorem 5.7 (2) are $[2^n-1,n+1,2^m-2+2^{|B_1|}+2^{|B_2|}-2^{|B_1\cap B_2|}]$.
}
\end{rem}

\begin{table}  [h]
\caption{Theorem 5.7 (1)}  
\begin{center}  
\begin{tabu} to 0.8\textwidth{X[2,c]|X[3,c]}  
\hline 
\rm{Weight}&\rm{Frequency}\\ 
\hline
$0$&$1$\\ 
\hline
$2^{n-1}$&$2^{n}-1$\\
\hline
$2^{|A_1|}+2^{|A_2|}-2^{|A_1\cap A_2|}-1$&$1$\\ 
\hline
$2^{n-1}+2^{|A_1|}+2^{|A_2|}-2^{|A_1\cap A_2|}-1$&$2^{n-|A_1\cup A_2|}-1$\\ 
\hline
$2^{n-1}+2^{|A_1|}-2^{|A_1\cap A_2|}-1$& $ 2^{n-|A_1\cup A_2|}(2^{|A_1\backslash  A_2|}-1)$ \\ 
\hline
$2^{n-1}+2^{|A_2|}-2^{|A_1\cap A_2|}-1$& $2^{n-|A_1\cup A_2|}(2^{|A_2\backslash  A_1|}-1)$ \\ 
\hline
$2^{n-1}-2^{|A_1\cap A_2|}-1$& $2^{n-|A_1\cup A_2|}(2^{|A_1\backslash  A_2|}-1)(2^{|A_2\backslash  A_1|}-1)$ \\ 
\hline
$2^{n-1}-1$&$2^{n-|A_1\cup A_2|}(2^{|A_1\cap A_2|}-1)2^{|A_1\backslash  A_2|+|A_2\backslash  A_1|}$\\
\hline
\end{tabu}  
\end{center}  
\end{table}

\begin{table}   [h]
\caption{Theorem 5.7 (2) }  
\begin{tabu} to 1\textwidth{X[2,c]|X[3.2,c]}  
\hline 
\rm{Weight}&\rm{Frequency}\\ 
\hline
$0$&$1$\\ 
\hline
$2^m-2+2^{|B_1|}+2^{|B_2|}-2^{|B_1\cap B_2|}$&$1$\\ 
\hline
$2^{n-1}+2^m-2+2^{|B_1|}+2^{|B_2|}-2^{|B_1\cap B_2|}$&$2^{n-m-|B_1\cup B_2|}-1$\\ 
\hline
$2^{n-1}+2^m-2+2^{|B_2|}-2^{|B_1\cap B_2|}$& $2^{n-m-|B_1\cup B_2|}(2^{|B_1\backslash B_2|}-1)$ \\ 
\hline
$2^{n-1}+2^m-2+2^{|B_1|}-2^{|B_1\cap B_2|}$& $2^{n-m-|B_1\cup B_2|}(2^{|B_2\backslash B_1|}-1)$ \\ 
\hline
$2^{n-1}+2^m-2-2^{|B_1\cap B_2|}$& $2^{n-m-|B_1\cup B_2|}(2^{|B_1\backslash B_2|}-1)(2^{|B_2\backslash B_1|}-1)$ \\ 
\hline
$2^{n-1}+2^m-2$&$ 2^{n-m-|B_1\cup B_2|}(2^{|B_1\cap B_2|}-1)2^{|B_1\backslash B_2 |+|B_2\backslash B_1 |}$\\
\hline
$2^{n-1}-2^{|B_1|}-2^{|B_2|}+2^{|B_1\cap B_2|}$&$2^{n-1-|B_1\cup B_2|}$\\ 
\hline
$2^{n-1}-2^{|B_2|}+2^{|B_1\cap B_2|}$&$2^{n-1-|B_1\cup B_2|}(2^{|B_2\backslash B_1|}-1)$\\ 
\hline
$2^{n-1}-2^{|B_1|}+2^{|B_1\cap B_2|}$&$2^{n-1-|B_1\cup B_2|}(2^{|B_1\backslash B_2|}-1)$\\ 
\hline
$2^{n-1}+2^{|B_1\cap B_2|}-2^{|B_1|-1}-2^{|B_2|-1}$&$2^{n-1-|B_1\cup B_2|}(2^{|B_1\backslash B_2|}-1)(2^{|B_2\backslash B_1|}-1)$\\
\hline
$2^{n-1}$&$2^{n}-1+2^{n-1-|B_1\cup B_2|}(2^{|B_1\cap B_2|}-1)2^{|B_1\backslash B_2 |+|B_2\backslash B_1 |}$\\ 
\hline
$2^{n-1}-2+ 2^{|B_1|}+2^{|B_2|}-2^{|B_1\cap B_2|}$&$2^{n-1-|B_1\cup B_2|}-2^{n-m-|B_1\cup B_2|}$\\
\hline
$2^{n-1}-2+2^{|B_2|}-2^{|B_1\cap B_2|}$&$(2^{n-1-|B_1\cup B_2|}-2^{n-m-|B_1\cup B_2|})(2^{|B_1\backslash B_2|}-1)$\\
\hline
$2^{n-1}-2+ 2^{|B_1|}-2^{|B_1\cap B_2|}$&$(2^{n-1-|B_1\cup B_2|}-2^{n-m-|B_1\cup B_2|})(2^{|B_2\backslash B_2|}-1)$\\
\hline
$2^{n-1}-2-2^{|B_1\cap B_2|}$&$2^{n-m-|B_1\cup B_2|}(2^{m-1}-1)(2^{|B_1\backslash B_2|}-1)(2^{|B_2\backslash B_1|}-1)$\\
\hline
$2^{n-1}-2$&$2^{n-m-|B_1\cup B_2|}2^{|B_1\backslash B_2|+|B_2\backslash B_1|}(2^{m-1}-1)(2^{|B_1\cap B_2|}-1)$\\
\hline
\end{tabu}  
\end{table}

\section{optimal and minimal linear codes}

In this section, we find some infinite families of optimal and minimal binary linear codes based on the results in Section 5.



\begin{thm}
Let $\mathcal{C}_{D}$ be a linear code constructed from Theorem 5.1 (2). 

$(1)$ If $|B|=1<m-1\le n-2$, then the code $\mathcal{C}_{D}$ is a Griesmer code with parameters $[2^n-1-2^m, n, 2^{n-1}-1-2^{m-1}]$. 

$(2)$ If $m=1$ and $|B|=n-1$, then the code $\mathcal{C}_{D}$  is a Griesmer code with parameters $[2^{n-1}-1, n-1, 2^{n-2}]$. 

$(3)$ The code $\mathcal{C}_{D}$ is minimal if and only if $(m,|B|)\notin \{(1,n-1),(1,n-2), (2,n-2), (n-1,1)\}$.




\end{thm}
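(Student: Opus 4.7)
The plan splits into three parts matching the three claims. Parts (1) and (2) are both Griesmer-bound verifications based on Table 2 of Theorem 5.1(2). In part (1), with $|B|=1$ and $3\le m\le n-1$, the minimum weight read from the table is $d=2^{n-1}-2^{m-1}-1$ and the length is $2^n-2^m-1$; I would verify the Griesmer identity by splitting $\sum_{i=0}^{n-1}\lceil d/2^i\rceil$ at $i=m-1$. For $0\le i\le m-2$ the fractional part $-1/2^i$ gives $\lceil d/2^i\rceil=2^{n-1-i}-2^{m-1-i}$; at $i=m-1$ the ceiling is $2^{n-m}-1$; and for $m\le i\le n-1$ it becomes $2^{n-1-i}$. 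The sum then telescopes to exactly $2^n-2^m-1$. In part (2), substituting $(m,|B|)=(1,n-1)$ into Table 2 forces $w_3=0$ with frequency one, so the encoding map $u\mapsto c_{D,u}$ acquires a one-dimensional kernel, the code collapses to dimension $n-1$, and the only surviving nonzero weight is $w_2=w_4=2^{n-2}$; the Griesmer identity reduces to $\sum_{i=0}^{n-2}2^{n-2-i}=2^{n-1}-1$.

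For part (3), the plan is to apply the Ashikhmin-Barg criterion (Lemma 2.2) for the \emph{if} direction and exhibit explicit covering pairs via Lemma 2.3 for the \emph{only if} direction. Generically $w_{\min}=w_3=2^{n-1}-2^{m-1}-2^{|B|}+1$, while $w_{\max}=w_1=2^{n-1}$ when $m+|B|<n$ and $w_{\max}=\max(w_2,w_5)$ when $m+|B|=n$. A direct comparison shows that $w_{\min}/w_{\max}>1/2$ in every sub-regime \emph{except} at the four listed pairs, where the ratio collapses to exactly $1/2$; Lemma 2.2 then yields minimality off the list. For the converse, I would use the decomposition $u=(v,w)$ with $v=(u_1,\ldots,u_m)$, $w=(u_{m+1},\ldots,u_n)$ from the proof of Theorem 5.1(2) to construct explicit covering witnesses: at $(1,n-2)$, letting $j$ be the unique element of $[n]\setminus[m]\setminus B$, the codeword $c_{D,e_j}$ attains $w_1$ and strictly contains $c_{D,e_1}$ of weight $w_3$; at $(2,n-2)$, $c_{D,e_1+e_2}$ attains $w_5=2w_3$, so its support is the disjoint union of the supports of $c_{D,e_1}$ and $c_{D,e_2}$, placing both weight-$w_3$ codewords strictly inside $c_{D,e_1+e_2}$; and at $(n-1,1)$, $c_{D,e_n}$ is the all-ones vector of length $|D|$, covering every other codeword. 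Non-minimality then follows from Lemma 2.3.

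The main obstacle is the bookkeeping in part (3): the six weight rows of Table 2 coincide or acquire zero frequency in several sub-regimes (when $m+|B|=n$, when $m=1$, when $|B|=n-m$), so identifying the true $w_{\max}$ among weights of positive frequency and ruling out spurious exceptional pairs beyond the four listed requires a meticulous enumeration. A secondary subtlety is the case $(1,n-1)$, where the code becomes a constant-weight simplex-type code of dimension $n-1$; here the argument proceeds through the vanishing $w_3=0$ and the degeneracy of the encoding map $u\mapsto c_{D,u}$ rather than through a single covering witness, and the conclusion must be matched carefully to the convention on $\mathcal{C}_D$ adopted in Section 5.
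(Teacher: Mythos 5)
Your parts (1) and (2) follow the paper's proof essentially verbatim: both are direct Griesmer-sum verifications from Table 2. (One small slip in (1): at $i=0$ the term is $\lceil d\rceil=2^{n-1}-2^{m-1}-1$, not $2^{n-1}-2^{m-1}$; with that correction your telescoping does give the required $2^n-2^m-1$.) For the ``if'' half of (3) you and the paper both invoke Lemma 2.2; the paper simply bounds $w_{\max}\le 2^{n-1}$ and reduces Ashikhmin--Barg to $2^{n-1}>2^m+2^{|B|+1}-2$, which fails exactly at the four listed pairs, so your finer identification of which row of Table 2 actually realizes $w_{\max}$ is harmless but unnecessary. Where you genuinely diverge is the ``only if'' half. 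The paper argues at the level of weights alone: it exhibits the relations $2w_4=w_1$, $2w_3=w_1$, $2w_3=w_5$, $2w_5=w_1$ at the four pairs and stops there. That is logically incomplete, since an additive relation among weights is necessary for a covering pair but not sufficient. Your explicit witnesses --- $\mathrm{supp}(c_{D,e_1})\subseteq\mathrm{supp}(c_{D,e_j})$ at $(1,n-2)$, the disjoint-support decomposition of $c_{D,e_1+e_2}$ at $(2,n-2)$, and the all-ones codeword $c_{D,e_n}$ at $(n-1,1)$ --- are all correct (I checked each against the structure of $I(\mathbb{P})$) and actually close that gap, so on this point your route is strictly more rigorous than the paper's.

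The one genuine problem is $(m,|B|)=(1,n-1)$, which you flag but do not resolve --- and which in fact cannot be resolved as stated. There $w_1=2^{n-1}$ has frequency $2^{n-m-|B|}-1=0$, so the paper's cited relation $2w_4=w_1$ is vacuous; the surviving data are $w_3=0$ with frequency $1$ and $w_2=w_4=2^{n-2}$ with total frequency $2^n-2$. Hence $\mathcal{C}_D$ is the constant-weight $[2^{n-1}-1,\,n-1,\,2^{n-2}]$ simplex code, and a constant-weight binary linear code is always minimal: if $\mathrm{supp}(b)\subseteq\mathrm{supp}(a)$ with $wt(a)=wt(b)$ then $a=b$. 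So the ``only if'' direction of part (3) is actually false at $(1,n-1)$ unless one adopts the nonstandard convention that the zero codeword $c_{D,u_0}$ arising from the nonzero kernel element $u_0$ counts as a non-minimal codeword. Your instinct to route this case through the degeneracy of $u\mapsto c_{D,u}$ rather than through a covering witness is the right one, but you should say explicitly that no covering witness exists and that the claim survives only under that convention; as written, neither your proposal nor the paper proves non-minimality at $(1,n-1)$.
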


\begin{proof} 


$(1)$ By Theorem 5.1 $(2)$, the minimum distance of the code  $\mathcal{C}_{D}$ is $2^{n-1}-1-2^{m-1}$, and so \begin{eqnarray*}
&&\sum_{i=0}^{n-1}\bigg\lceil {\frac{2^{n-1}-1-2^{m-1}}{2^i}} \bigg\rceil \\
&=&(2^{n-1}-1-2^{m-1})+(2^{n-2}-2^{m-2})+\cdots+(2^{n-m}-1)+2^{n-m-1}+\cdots+1\\
&=&(1+2+\cdots+2^{n-1})+(1+2+\cdots+2^{m-1})-1
\\&=&(2^n-1)-(2^m-1)-1= 2^n-1-2^m,
\end{eqnarray*} 
which implies that the code $\mathcal{C}_{D}$ is a Griesmer code. 

$(2)$ By Theorem 5.1 $(2)$,  the code $\mathcal{C}_{D}$ has parameters $[2^{n-1}-1, n-1, 2^{n-2}]$. Then \begin{eqnarray*}
&&\sum_{i=0}^{n-2}\bigg\lceil {\frac{2^{n-2}}{2^i}} \bigg\rceil =2^{n-1}-1,
\end{eqnarray*} which implies that the code $\mathcal{C}_{D}$ is a Griesmer code.


$(3)$ By Table 2, the code $\mathcal{C}_{D}$ is  a  at most six-weight linear code with $w_1=2^{n-1}, w_2=2^{n-1}-2^{|B|-1},w_3=2^{n-1}+1-2^{m-1}-2^{|B|},w_4=2^{n-1}+1-2^{m-1}-2^{|B|-1},w_5=2^{n-1}-2^{m-1}$ and $w_6=2^{n-1}-2^{m-1}-2^{|B|-1}$. The minimum weight $d$ of the code $\mathcal{C}_{D}$ is either $w_3$ or $w_6$ and its maximum weight is $2^{n-1}$. 
Suppose that $d=w_3>0~(\Leftrightarrow (m,|B|)\neq(1,n-1))$.
The sufficient condition of Lemma 2.2 to be minimal is that $2^{n-1}>2^m+2^{|B|+1}-2$. It follows that if $(m,|B|)\notin \{(1,n-2), (2,n-2), (n-1,1)\}$, then the code $\mathcal{C}_{D}$ is minimal. 
On the other hand, by Lemma 2.3, the code $\mathcal{C}_{D}$ is minimal if and only if $w_i+w_j\neq w_k$ for any $i,j,k\in \{1,2,\ldots,6\}$. The tedious computations show that if $(m,|B|)\in \{(1,n-1),(1,n-2), (2,n-2), (n-1,1)\}$, then the latter condition does not hold. In fact, if $(m,|B|)=(1,n-1),(1,n-2), (2,n-2),(n-1,1)$, then $2w_4=w_1,2w_3=w_1,2w_3=w_5,2w_5=w_1$, respectively. 

This completes the proof. \end{proof}



\begin{thm} 

Let $\mathcal{C}_{D}$ be a linear code constructed from Theorem 5.3 (2). If $|B_1|=|B_2|=1$ and $1< m\le n-2$, then the code $\mathcal{C}_{D}$ is a distance-optimal linear code with parameters $[2^n-2^m-2,n,2^{n-1}-2^{m-1}-2]$.
In this case, $\mathcal{C}_{D}$ is minimal if $n\ge 4$.
\end{thm}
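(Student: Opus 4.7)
The plan is to read off the parameters of $\mathcal{C}_D$ from Theorem 5.3 (2) applied with $|B_1|=|B_2|=1$, then establish distance-optimality via Griesmer's bound and minimality via Lemma 2.2. Since $I_1 \neq I_2$ and both $B_i$ are nonempty singletons (forcing $A_i=[m]$ by Lemma 4.1), we must have $B_1 \cap B_2 = \emptyset$. Inclusion--exclusion (Theorem 3.1) combined with Lemma 4.2 yields $|\mathcal{I}(\mathbb{P})| = 2^m + 2$, so $|D| = 2^n - 2^m - 2$. Substituting $|B_1|=|B_2|=1$ and $|B_1 \cap B_2|=0$ into Table 4 (and noting that the two rows whose frequency carries the factor $2^{|B_1\cap B_2|}-1$ now contribute nothing), the nonzero weights lie in
\[
\{\, 2^{n-1},\ 2^{n-1}-1,\ 2^{n-1}-2,\ 2^{n-1}-2^{m-1},\ 2^{n-1}-2^{m-1}-1,\ 2^{n-1}-2^{m-1}-2 \,\},
\]
with minimum $d := 2^{n-1} - 2^{m-1} - 2$. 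Since $d > 0$ for $1<m\le n-2$, the map $u\mapsto c_{D,u}$ is injective and $\dim \mathcal{C}_D = n$.

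For distance-optimality, I would apply the Griesmer bound to a hypothetical $[\,2^n-2^m-2,\, n,\, d+1\,]$ code with $d+1 = 2^{n-1} - 2^{m-1} - 1$. I split $\sum_{i=0}^{n-1} \lceil (d+1)/2^i \rceil$ into three ranges: for $1 \le i \le m-1$ the fractional contribution $-2^{-i}$ is absorbed into the ceiling and the term equals $2^{n-1-i} - 2^{m-1-i}$; for $m \le i \le n-2$ the deduction $2^{m-1-i} + 2^{-i}$ is strictly less than $1$, so the term is $2^{n-1-i}$; the $i = n-1$ term equals $1$. Together with the $i=0$ contribution $d+1$, the sum telescopes to $2^n - 2^m - 1$, which strictly exceeds the actual length $2^n - 2^m - 2$. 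Hence no $[\,2^n-2^m-2,\, n,\, d+1\,]$ code exists, proving that $\mathcal{C}_D$ is distance-optimal.

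For minimality I would invoke Lemma 2.2, which requires identifying $w_{\max}$ correctly. The frequency of the weight $2^{n-1}$ in Table 4 is $2^{n-m-2}-1$, which is positive exactly when $m \le n-3$. Thus, if $2 \le m \le n-3$, then $w_{\max} = 2^{n-1}$ and the Ashikhmin--Barg condition $2w_{\min} > w_{\max}$ reduces to $2^{n-1} > 2^m + 4$; this holds since $2^m \le 2^{n-3}$ and $n \ge 5$ forces $3\cdot 2^{n-3} > 4$. If $m = n-2$, then $2^{n-1}$ does not appear and $w_{\max} = 2^{n-1}-1$ (whose frequency $2^{n-m-1}=2$ is positive); the inequality becomes $2^{n-2} > 3$, valid for $n \ge 4$. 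Combining both regimes gives minimality whenever $n \ge 4$.

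The main obstacle is the borderline case $m = n-2$ (in particular $n=4$, $m=2$), where taking $w_{\max} = 2^{n-1}$ naively would yield $2w_{\min} = w_{\max}$ and the Ashikhmin--Barg criterion would fail; one must notice that the frequency of $2^{n-1}$ vanishes and that $w_{\max}$ drops to $2^{n-1}-1$. The remaining difficulty is the careful ceiling arithmetic in the Griesmer sum, which must treat $i \le m-1$, $i = m$, $m < i \le n-2$, and $i = n-1$ separately.
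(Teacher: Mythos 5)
Your proposal is correct and follows essentially the same route as the paper: read the weights off Table~4, apply the Griesmer bound for distance-optimality, and apply Lemma~2.2 for minimality. Two remarks. First, the paper only asserts ``by using the Griesmer bound, we can show that it is distance-optimal,'' whereas you actually carry out the ceiling computation; your split of the sum $\sum_{i=0}^{n-1}\lceil (d+1)/2^i\rceil$ into the ranges $i\le m-1$, $m\le i\le n-2$, $i=n-1$ is right and the total $2^n-2^m-1>2^n-2^m-2$ does the job. Second, your handling of minimality is in fact \emph{more} careful than the paper's: the paper argues $2w_6=2^n-2^m-4\ge 2^{n-1}+2^{n-2}-4>2^{n-1}$, but for $n=4$, $m=2$ this chain gives $2w_6=2^{n-1}$ exactly, so the strict Ashikhmin--Barg inequality against $w_{\max}=2^{n-1}$ fails as written. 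You correctly observe that when $m=n-2$ the frequency $2^{n-m-|B_1\cup B_2|}-1$ of the weight $2^{n-1}$ vanishes, so $w_{\max}$ drops to $2^{n-1}-1$ and the condition $2w_{\min}>w_{\max}$ is restored; this closes a small gap in the paper's own argument while reaching the same conclusion.
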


\begin{proof}

Note that in Theorem 5.3 (2), $I_1\not\subseteq I_2$ and $I_2\not\subseteq I_1$. Then $B_1\cap B_2=\emptyset$ as $|B_1|=|B_2|=1$. In this case, the code $\mathcal{C}_{D}$ is a at most six-weight linear code with $w_1=2^{n-1}$, $w_2=2^{n-1}-1$, $w_3=2^{n-1}-2$, $w_4=2^{n-1}-2^{m-1}$, $w_5=2^{n-1}-2^{m-1}-1$ and $w_6=2^{n-1}-2^{m-1}-2$. It follows that the code $\mathcal{C}_{D}$ has parameters $[2^n-2^m-2,n,2^{n-1}-2^{m-1}-2]$. By using the Griesmer bound, we can show that it is distance-optimal. Due to $m\le n-2$ and $n\ge 4$, we get $2w_6=2^n-2^m-4\ge 2^n-2^{n-2}-4=2^{n-1}+2^{n-2}-4>2^{n-1}=w_1$. Then the code $\mathcal{C}_{D}$ is minimal by Lemma 2.2. 
This completes the proof.
\end{proof}

\begin{thm} 
Let $\mathcal{C}_f$ be a linear code constructed from Theorem 5.5 (2).

$(1)$ If $m=n-1$ and $|B|=1$, then the code $\mathcal{C}_{f}$  is an almost optimal linear code with parameters $[2^n-1, n+1, 2^{n-1}-2]$. 

$(2)$ If $m+|B|=n\geq5$ and $\max\{m,|B|\}\le n-2$, then the code $\mathcal{C}_{f}$ is minimal violating the condition of Aschikhmin-Barg.
\end{thm}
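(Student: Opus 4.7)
The plan is to specialize the weight distribution of Theorem 5.5\,(2) in both parts and then apply standard machinery: the Griesmer bound for part (1), and Lemmas 2.1 and 2.3 for part (2).

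For part (1) I would set $m=n-1$ and $|B|=1$ in Table 6. The row whose frequency is $2^{n-m-|B|}-1$ vanishes, and the surviving rows collapse onto three distinct nonzero weights $2^{n-1}-2$, $2^{n-1}$ and $2^{n}-2$. The minimum distance is then $d=2^{n-1}-2$, the length is $2^{n}-1$, and summing all frequencies gives $2^{n+1}$, so the dimension equals $n+1$. To conclude almost-optimality, I would verify that $[2^{n}-1,n+1,2^{n-1}-1]$ is a Griesmer code:
\[
\sum_{i=0}^{n}\left\lceil\tfrac{2^{n-1}-1}{2^{i}}\right\rceil=(2^{n-1}-1)+\sum_{i=1}^{n-1}2^{n-1-i}+1=2^{n}-1,
\]
so such a code is distance-optimal and consequently $\mathcal{C}_f$ is almost distance-optimal.

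For part (2) I would substitute $m+|B|=n$ in Table 6. The row of weight $2^{n-1}-2+2^{m}+2^{|B|}$ again vanishes and the remaining nonzero weights are
\[
w_{1}=2^{n-1},\ w_{2}=2^{m}+2^{|B|}-2,\ w_{3}=2^{n-1}+2^{m}-2,\ w_{4}=2^{n-1}-2^{|B|},\ w_{5}=2^{n-1}+2^{|B|}-2,\ w_{6}=2^{n-1}-2.
\]
The hypotheses $\max\{m,|B|\}\le n-2$ and $m+|B|=n\ge 5$ force $m,|B|\ge 2$; taking $|B|\ge m$ by symmetry yields $|B|\ge\lceil n/2\rceil\ge 3$. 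A direct inspection gives $w_{\max}=w_{5}$ and $w_{\min}=w_{4}=2^{|B|}$ when $m=2$ or $w_{\min}=w_{2}$ when $m\ge 3$; in either sub-case an arithmetic estimate shows $w_{\max}\ge 2w_{\min}$, so the Aschikhmin-Barg hypothesis of Lemma 2.1 fails.

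To establish minimality I would apply Lemma 2.3 in the form: $\mathcal{C}_f$ is minimal iff no nonzero codeword $c_f(s,u)$ covers another, i.e., $\mathrm{supp}(c_f(s_{2},u_{2}))\not\subseteq\mathrm{supp}(c_f(s_{1},u_{1}))$ for distinct nonzero labels. A routine check of pair sums $w_{i}+w_{j}$ against the six $w_{k}$ shows that the only numerical coincidence is $w_{2}+w_{4}=w_{3}$, so all other potential covering relations are eliminated at the weight level. The remaining case is that a weight-$w_{3}$ codeword $c_f(1,(0,w))$ covers the unique weight-$w_{2}$ codeword $c_f(1,0)$, whose support is $\mathcal{I}(\mathbb{P})\setminus\{\emptyset\}$. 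Covering would demand $w\cdot x'=0$ for every nonempty $x'\subseteq B$, forcing $w=0$ and contradicting distinctness. The principal obstacle of the proof is precisely this last verification: because of the identity $w_{2}+w_{4}=w_{3}$, minimality is not visible from the weight distribution alone and must be extracted from the explicit support structure, using the case split from the proof of Theorem 5.5 (whether $v=(u_{1},\ldots,u_{m})$ is zero, odd-weight or even-weight, and whether $w\cap B=\emptyset$) to rule out the surviving covering candidate.
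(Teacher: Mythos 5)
Part (1) is correct and is essentially the paper's own argument: specialize Table 6 at $(m,|B|)=(n-1,1)$, read off $[2^n-1,\,n+1,\,2^{n-1}-2]$, and run the Griesmer sum for $d+1=2^{n-1}-1$.

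Part (2) follows the paper's strategy (reduce minimality to weight-level identities via Lemma 2.3, then compare $w_{\min}$ with $w_{\max}/2$), but your ``routine check'' of pair sums is wrong in both directions, and one of the errors is fatal. First, the coincidence you flag, $w_2+w_4=w_3$, can never be realized by codewords: if $a=b+c$ then the $s$-coordinates satisfy $s_a=s_b+s_c$, so the three weights cannot all lie in $\{w_2,\dots,w_6\}$ --- one of the three codewords must have $s=0$ and hence weight $0$ or $w_1$. The only identities that matter are $w_1+w_i=w_j$ and $w_i+w_j=w_1$ with $i,j\in\{2,\dots,6\}$. (Your support argument for the flagged case is also off: nonempty subsets of $B$ are not order ideals of $\mathbb{H}(m,n)$, so they do not lie in $\mathrm{supp}(c_f(1,0))=\mathcal{I}(\mathbb{P})\setminus\{\emptyset\}$; what the covering condition would force is $w\cap B=\emptyset$, not $w=0$.) Second, and decisively, you miss the coincidence $w_4+w_4=2\left(2^{n-1}-2^{|B|}\right)=w_1$, which occurs exactly when $|B|=n-2$, i.e.\ $(m,|B|)=(2,n-2)$ --- a sub-case explicitly allowed by the hypotheses. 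This coincidence is realized by actual codewords: the $2^{n-1-|B|}=2$ codewords of weight $w_4$ are $c_f(1,(v,0))$ with $v\in\mathbb{F}_2^2$ of weight $1$, and their sum $c_f(0,((1,1),0,\dots,0))$ has weight $2^{n-1}=2w_4$, so the two weight-$w_4$ codewords have disjoint supports and are both covered by a codeword of weight $w_1$. By Lemma 2.3 the code is therefore \emph{not} minimal when $(m,|B|)=(2,n-2)$, so this gap cannot be closed without excluding that sub-case. For what it is worth, the paper's own proof shares the blind spot: it only verifies $w_1+w_i\neq w_j$ and never addresses the family $w_i+w_j=w_1$, which is precisely where the problem sits.
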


\begin{proof} 
$(1)$ The parameters of the code $\mathcal{C}_f$ follows from Table 6.  
We see from
\begin{align*}&&\sum_{i=0}^{n}\bigg\lceil {\frac{2^{n-1}-1}{2^i}} \bigg\rceil =1+ \sum_{i=0}^{n-1}\bigg\lceil {\frac{2^{n-1}-1}{2^i}}\bigg\rceil =1+(2^n-1)-1=2^n-1
\end{align*}
that the code $\mathcal{C}_f$ is almost optimal.

$(2)$ The weights of the code $\mathcal{C}_{f}$ are as follows:
$$w_1=2^{n-1}, w_2=2^m+2^{|B|}-2, w_3=2^{n-1}-2^{|B|},w_4=2^{n-1}-2,$$
$$w_5=2^{n-1}-2+2^m, w_6=2^{n-1}-2+2^{|B|}$$
because the frequency corresponding to the weight $2^{n-1}-2+2^m+2^{|B|}$ equals the zero by our assumption that $m+|B|=n$. 
We see from Eq. (5.6) that for any $u\in \Bbb F_2^n$, 
\begin{equation}
wt(c_f(0,u))\in \{0,w_1\}\mbox{ or } wt(c_f(1,u))\in\{w_2,w_3, w_4, w_5,w_6\}.
\end{equation}
By Lemma 2.3, the code $\mathcal{C}_{f}$ is minimal if and only if for each pair of distinct nonzero codewords  $c_f(s,u)$ and  $ c_f(s',u')$ in the code $\mathcal{C}_{f}$, we should have 
\begin{equation}
wt(c_f(s+s',u+u'))=wt(c_f(s,u)+c_f(s',u'))\neq wt(c_f(s,u))-wt(c_f(s',u')).
\end{equation}
It follows from Eqs. (6.1) and (6.2) that to show the code $\mathcal{C}_{f}$ is minimal, it suffices to verify that $w_1+ w_i\neq w_j$ for any $i,j\in \{2,3,4,5,6\}$.

By the conditions that $m+|B|=n$ and $\max\{m,|B|\}\le n-2$, we have $|B|\geq2$. Let $w_{\max}$ denote the maximum nonzero Hamming weights in the code $\mathcal{C}_f$. Then $w_{\max}=2^{n-1}-2+2^{\max\{m,|B|\}}$, so that 
 $$2^{n-1}+w_2=2^{n-1}+2^m+2^{|B|}-2>w_{\max},$$
 $$2^{n-1}+w_3=2^{n-1}+2^{n-1}+2^{|B|}>w_{\max},$$  $$2^{n-1}+w_4=2^{n}-2>2^{n}-2^{|B|}>w_{\max},$$   
 $2^{n-1}+w_5>2^{n-1}+w_4$ and $2^{n-1}+w_6>2^{n-1}+w_4$. This prove that the code $\mathcal{C}_{f}$ is minimal. 
It remains to show that ${w_{\min}}/{w_{\max}}<1/2$.
We proceed with the proof by considering the following three cases.

(i) If $|B|\leq m$, then $w_{\min}=2^m+2^{|B|}-2$ and $w_{\max}=2^{n-1}+2^m-2$ because $n\geq5$, and so $w_{\max}-2w_{\min}=2^{n-1}+2^m-2-(2^{m+1}+2^{|B|+1}-4)=2^{n-1}+2-2^m-2^{|B|+1}>0.$
 
(ii) If $m<|B|<n-2$, then $w_{\min}=2^m+2^{|B|}-2$ and $w_{\max}=2^{n-1}+2^{|B|}-2$ because $n\geq5$, and so $w_{\max}-2w_{\min}=2^{n-1}+2^{|B|}-2-(2^m+2^{|B|}-2)=2^{n-1}-2^{|B|}>0$.
 
(iii) If $|B|=n-2$ and $m=2$, then $w_{\min}=2^{n-1}-2^{|B|}=2^{n-2}$ and $w_{\max}=2^{n-1}+2^{|B|}-2$, and so $w_{\max}-2w_{\min}=2^{n-1}+2^{|B|}-2-2^{n-1}=2^{|B|}-2>0$.

This completes the proof.
\end{proof}

\begin{thm} 
Let $\mathcal{C}_f$ be a linear code constructed from Theorem 5.7 (2). If $B_1\cap B_2=\emptyset$, $|B_1|+|B_2|=n-m$, and $\max\{|B_1|,|B_2|\}\le n-2$, then the code $\mathcal{C}_{f}$ is minimal violating the condition of Aschikhmin-Barg with parameters $[2^n-1, n+1, 2^m-3+2^{|B_1|}+2^{|B_2|}]$.
\end{thm}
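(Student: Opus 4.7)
The plan is to mirror the three-step strategy used in the proof of Theorem 6.3: first, extract the weight set from Table~8; second, identify $w_{\min}$ and $w_{\max}$; third, apply Lemma 2.3 to prove minimality. With $B_1\cap B_2=\emptyset$ and $|B_1|+|B_2|=n-m$ we have $|B_1\cup B_2|=n-m$, so the factors $2^{|B_1\cap B_2|}-1=0$ and $2^{n-m-|B_1\cup B_2|}-1=0$ cause several rows of Table~8 to collapse. Writing $M=2^m$, $a=2^{|B_1|}$, $b=2^{|B_2|}$ (so that $Mab=2^n$), the Hamming weights attained by $\mathcal{C}_f$ reduce to $w_0:=2^{n-1}$, the candidate minimum $w_{\ast}:=M+a+b-3$, and twelve values of the form $2^{n-1}+\varepsilon$ for $\varepsilon$ in an explicit set $E$ of integer combinations of $\pm M,\pm a,\pm b,\pm 3,\pm 1$. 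The length is $|\mathbb{F}_2^{n*}|=2^n-1$ by construction, and the dimension equals $n+1$ as soon as the minimum distance is positive, which will follow from Step~2.

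Second, I would establish the minimum and maximum distances. The hypothesis $\max\{|B_1|,|B_2|\}\le n-2$ combined with $m+|B_1|+|B_2|=n$ and $m,|B_1|,|B_2|\ge 1$ forces all three exponents into $[1,n-2]$, whence a routine estimate gives $M+a+b\le 2^{n-2}+4$. This bound is the key tool: comparing $w_{\ast}$ to each of the twelve values $2^{n-1}+\varepsilon$ shows that $w_{\ast}$ is the smallest nonzero weight, and identifying the largest $\varepsilon$ as $M+\max\{a,b\}-3$ gives $w_{\max}=2^{n-1}+M+\max\{a,b\}-3$. The same bound forces $2w_{\ast}<w_{\max}$, so the Aschikhmin--Barg condition $w_{\min}/w_{\max}>1/2$ fails.

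Finally, for minimality I would argue exactly as at the end of the proof of Theorem 6.3. By Eq.~(6.1), every codeword with $s=0$ has weight in $\{0,w_0\}$, so applying Lemma 2.3 and Eq.~(6.2) reduces the verification to showing $w_0+w_i\ne w_j$ for all pairs $w_i,w_j$ drawn from the ``$s=1$'' weight set $T:=\{w_{\ast}\}\cup\{2^{n-1}+\varepsilon:\varepsilon\in E\}$. Every element of $T$ differs from $2^{n-1}$ by at most $M+\max\{a,b\}+3$, so after shifting by $w_0$ the values cluster near $2^n$, which already exceeds every element of $T$ once the bound of Step~2 is invoked.

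The principal obstacle is the bookkeeping in Step~2. The most dangerous competitor to $w_{\ast}=M+a+b-3$ is $2^{n-1}+1-a-b$ (the smallest ``minus'' weight in $E$), and ruling it out reduces to the inequality $M+2(a+b)<2^{n-1}+4$; similarly, in Step~3 the tightest coincidence to exclude is $w_0+w_{\ast}=2^{n-1}+\varepsilon$ for some $\varepsilon\in E$. At each of these pinch points, the estimate $M+a+b\le 2^{n-2}+4$, which uses precisely the hypothesis $\max\{|B_1|,|B_2|\}\le n-2$, is what closes the argument.
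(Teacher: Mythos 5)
Your overall strategy is exactly the paper's: read off the surviving weights of Table~8 under the degeneracies $2^{|B_1\cap B_2|}-1=0$ and $2^{n-m-|B_1\cup B_2|}-1=0$, reduce minimality via Lemma~2.3 and the analogues of Eqs.~(6.1)--(6.2) to the non-identities $2^{n-1}+w_i\neq w_j$, and finish by checking $2w_{\min}<w_{\max}$. Your weight inventory is in fact more faithful to Table~8 than the paper's own list (which misstates several entries and omits the weight $2^{n-1}+1-2^{|B_1|}-2^{|B_2|}$, of multiplicity $2^{m-1}\ge 1$).

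The genuine gap is that the estimates you propose at your own ``pinch points'' do not hold, and the pinch points are real. From $M+a+b\le 2^{n-2}+4$ you cannot deduce $M+2(a+b)<2^{n-1}+4$: the admissible choice $(m,|B_2|,|B_1|)=(1,1,n-2)$ gives $M+2(a+b)=2+2(2^{n-2}+2)=2^{n-1}+6$. For these parameters the occurring weight $2^{n-1}+1-a-b=2^{n-2}-1$ is strictly below $w_{\ast}=2^{n-2}+1$, so $w_{\ast}$ is not the minimum weight and the claimed parameters are wrong; worse, $w_0+(2^{n-1}+1-a-b)=2^{n-1}+2^{n-2}-1$ coincides with the occurring weight $2^{n-1}+2^m+2^{|B_1|}-3$, so the test $w_0+w_i\ne w_j$ fails and one can exhibit $c_f(1,u)$, $c_f(1,u')$ with $wt(c_f(1,u))-wt(c_f(1,u'))=2^{n-1}=wt(c_f(0,u+u'))$, i.e.\ the code is not minimal there. (The paper's proof does not escape this either: it simply asserts $w_{\min}=2^m+2^{|B_1|}+2^{|B_2|}-3$ and never confronts the weight $2^{n-1}+1-a-b$; the theorem apparently needs an extra hypothesis excluding such boundary cases.) A secondary slip: $w_{\max}$ is not always $2^{n-1}+M+\max\{a,b\}-3$; when $2\le m<\min\{|B_1|,|B_2|\}$ it is $2^{n-1}+a+b-3$, which is precisely why the paper splits its final inequality into two cases.
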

\begin{proof} The weights of the code $\mathcal{C}_{f}$ are as follows:
$$w_1=2^{n-1}, w_2=2^m+2^{|B_1|}+2^{|B_2|}-3, w_3=2^{n-1}+2^m-3-2^{|B_1|},
w_4=2^{n-1}+2^m-3-2^{|B_2|},
$$
$$w_5=2^{n-1}+2^m-3,w_6=2^{n-1}+1-2^{|B_1|},
w_7=2^{n-1}+1-2^{|B_2|},w_8=2^{n-1}+1,$$
$$w_9=2^{n-1}-3,w_{10}=2^{n-1}-3+2^{|B_1|}, w_{11}=2^{n-1}-3+2^{|B_2|},w_{12}=2^{n-1}-3+2^{|B_1|}+2^{|B_2|},$$
because the  numbers $2^{n-m-|B_1\cup B_2|}-1$, and $2^{|B_1\cap B_2|}-1$ equals the zero which appear in frequencies  in Table 8 by our assumptions that $B_1\cap B_2=\emptyset$ and $|B_1|+|B_2|=n-m$.

We see from Eq. (5.6) that for any $u\in \Bbb F_2^n$, 
\begin{equation}
wt(c_f(0,u))\in \{0,w_1\}\mbox{ or } wt(c_f(1,u))\in\{w_2,\cdots,w_{12}\}.
\end{equation}
By Lemma 2.3,  the code $\mathcal{C}_{f}$ is minimal if and only if for each pair of distinct nonzero codewords  $c_f(s,u)$ and  $ c_f(s',u')$ in  the code $\mathcal{C}_{f}$, we should have 
\begin{equation}
wt(c_f(s+s',u+u'))=wt(c_f(s,u)+c_f(s',u'))\neq wt(c_f(s,u))-wt(c_f(s',u')).
\end{equation}
It follows from Eqs. (6.3) and (6.4) that to show the code $\mathcal{C}_{f}$ is minimal, it suffices to verify that $w_1+ w_i\neq w_j$ for any $i,j\in \{2,\cdots, 12\}$.

Let $w_{\max}$ denote the maximum nonzero Hamming weights in the code $\mathcal{C}_f$.
By the conditions that $B_1\cap B_2=\emptyset$,  $|B_1|+|B_2|=n-m$, and $\max\{|B_1|,|B_2|\}\le n-2$, we have $w_{\max}=2^{n-1}+2^m-3-2^{\min\{|B_1|,|B_2|\}}$ or $2^{n-1}-3+2^{|B_1|}+2^{|B_2|}$,
so that
 $$2^{n-1}+w_2=2^{n-1}+2^m+2^{|B_1|}+2^{|B_2|}-3>w_{\max},$$
 $$2^{n-1}+w_6=2^{n-1}+2^{n-1}-2^{|B_1|}>w_{\max},$$
 $$2^{n-1}+w_7=2^{n-1}+2^{n-1}-2^{|B_2|}>w_{\max},$$
and for $i\in \{3,4,5, 8,\ldots, 12\}$, there exists $j\in\{2,6,7\}$ such that $2^{n-1}+w_i>2^{n-1}+w_j$. This prove that the code $\mathcal{C}_{f}$ is minimal. 
It remains to show that ${w_{\min}}/{w_{\max}}<1/2$. 

We proceed with the proof by considering the following two cases.
It is easy to check that  $w_{\min}=\min\{w_1, \ldots, w_{12}\}=w_3=2^m+2^{|B_1|}+2^{|B_2|}-3$.

(i) If $m\ge \max\{|B_1|, |B_2|\}$ or $\max\{|B_1|, |B_2|\}>m\ge\min\{|B_1|, |B_2|\}$, then  $w_{max}=2^{n-1}+2^m-3+2^{\max\{|B_1|, |B_2|\}}$. Then \begin{eqnarray*}&&w_{\max}-2w_{\min}
\\&=&2^{n-1}+2^m-3+2^{\max\{|B_1|, |B_2|\}}-(2^{m+1}+2^{|B_1|+1}+2^{|B_2|+1}-6)\\
&=&2^{n-1}+3-(2^m+2^{|B_1|+1}+2^{|B_2|+1}-2^{\max\{|B_1|, |B_2|\}})>0.\end{eqnarray*}
 
(ii) If $m<\min\{|B_1|, |B_2|\}$ ,  then $w_{\max}=2^{n-1}+2^{\max\{|B_1|, |B_2|\}}+2^{\min\{|B_1|, |B_2|\}}-3$. Then \begin{eqnarray*}&&w_{\max}-2w_{\min}
\\&=&2^{n-1}+2^{\min\{|B_1|, |B_2|\}}-3+2^{\max\{|B_1|, |B_2|\}}-(2^{m+1}+2^{|B_1|+1}+2^{|B_2|+1}-6)\\
&>&2^{n-1}+3-(2^m+2^{|B_1|+1}+2^{|B_2|+1}-2^{\max\{|B_1|, |B_2|\}})>0.\end{eqnarray*}

 This completes the proof.
\end{proof}

\begin{exa} Let $\mathbb{H}(2,5)$ be a hierarchical poset with two levels. Let $I=\{1,3,4\}$. Then the code defined in Eq. (5.1) has parameters $[25, 5,11]$ and its weight enumerator is given by $$1+4z^{11}+6z^{12}+12z^{13}+8z^{14}+z^{16}.$$ In fact, the optimal binary  linear code has parameter $[25,5,12]$, according to  \cite{G2}. 
\end{exa}

\begin{exa} Let $\mathbb{H}(2,5)$ be a hierarchical poset with two levels. Let $I=\{1,3,4,5\}$. Then the code $\mathcal{C}_{f}$ defined in Eq. (5.5) has parameters $[31, 6, 8]$ and its weight enumerator is given by $$1+2z^{8}+z^{10}+11z^{14}+45z^{16}+3z^{18}+z^{22}.$$  By Theorem 6.3 (2) and  $$\frac{w_{\min}}{w_{\max}}=\frac{8}{22}<1/2,$$ the code $\mathcal{C}_{f}$ is minimal violating the condition of Aschikhmin-Barg. 
\end{exa}

\section{Concluding remarks}

The main contributions of this paper are following

\begin{itemize}
\item Constructions of  binary linear codes defined in Eqs. (5.1) and (5.5)  associated with posets.

\item Determinations of  weight distributions of binary linear codes associated with hierarchical posets of two levels (Theorems 5.1, 5.3, 5.5, and 5.7).

\item Some infinite families of minimal and optimal binary linear codes violating Aschikhmin-Barg's condition (Theorems 6.1, 6.2, 6.3, and 6.4).
\end{itemize}

By huge computation,  more optimal binary linear codes can be found from Theorems 5.3, 5.5 and 5.7. As future works, it should be interesting to find optimal and minimal binary linear codes by using other posets. 

In this paper, we only considered linear codes generated by one or two order ideals in hierarchical posets of two levels. It also should be interesting to investigate the cases of more than two order ideals in hierarchical posets with two levels  or many levels.

\bigskip
\section*{Acknowledgments}
This research was supported by the National Natural Science Foundation of China (No. 61772015), the Foundation of Science and Technology on Information Assurance Laboratory (No. KJ-17-010),  the National Research Foundation of Korea grant funded by the Korea government (No. NRF-2017R1A2B2004574), and  the Basic Science Research Program through the National Research Foundation of Korea funded by the Ministry of Education (No. 2015049582). 

Part of this work was done when Yansheng Wu was visiting Korea Institute for Advanced Study (KIAS),  Seoul, South Korea. Yansheng Wu would like to thank the institution for the kind hospitality. 

The authors are very grateful to the reviewers and the Editor for their valuable comments and suggestions
to improve the quality of this  paper. The authors contribute 
equally to this paper.

\end{document}